\def\Re{\ensuremath{\mathbb R}}
\def\eps{{\varepsilon}}
\providecommand{\remove}[1]{}
\newcommand{\emphi}[1]{\emph{#1}}
\newcommand{\VC}{\ensuremath{\mathsf{VC}}\xspace}
\newcommand{\E}{\ensuremath{\mathcal{E}}}
\newcommand{\C}{\mathcal{C}}
\newcommand{\pth}[1]{\!\left({#1}\right)}
\newcommand{\alert}[1]{\textbf{\color{red}
[[[#1]]]}\marginpar{\textbf{\color{red}**}}\typeout{ALERT:
\the\inputlineno: #1}}
\newcommand{\old}[1]{{{}}}
\newcommand{\namedref}[2]{\hyperref[#2]{#1~\ref*{#2}}}
\newtheorem{theorem}{Theorem}
\newtheorem{definition}{Definition}
\newtheorem{corollary}{Corollary}
\newtheorem{lemma}{Lemma}
  \author{Nicolas Grelier\affiliationmark{1} \thanks{Research supported by the Swiss National Science Foundation within the collaborative DACH project Arrangements and Drawings as SNSF Project 200021E-171681.}\and
Saeed Gh. Ilchi\affiliationmark{1}\\ \and
Tillmann Miltzow\affiliationmark{2}\thanks{Veni grant EAGER}\and
Shakhar Smorodinsky\affiliationmark{3}\thanks{Grant 635/16 from the Israel Science Foundation.}}
\title{On the VC-dimension of half-spaces with respect to convex sets}
\affiliation{
  Department of Computer Science, ETH Z{\"u}rich, Switzerland\\
 Department of Computer Science, Utrecht University, Netherlands\\
 Department of Mathematics, Ben-Gurion University of the NEGEV, Beer-Sheva, Israel}
\keywords{VC-dimension, epsilon nets, convex sets, halfplanes}
\begin{document}
\publicationdetails{23}{2021}{2}{2}{6631}
\maketitle
\begin{abstract}
A family $S$ of convex sets in the plane defines a hypergraph $H = (S,\E)$ with $S$ as a vertex set and $\E$ as the set of hyperedges as follows.
Every subfamily $S'\subset S$ defines a hyperedge in $\E$ if and only if
there exists a halfspace $h$ that fully contains $S'$, and
no other set of $S$ is fully contained in $h$.
In this case, we say that $h$ realizes $S'$.
We say a set $S$ is shattered, if all its subsets
are realized. The \VC-dimension of a hypergraph $H$
is the size of the largest shattered set. We show that the \VC-dimension for \emph{pairwise disjoint} convex sets in 
the plane is bounded by $3$, and this is tight.
In contrast, we show the \VC-dimension of convex sets in the plane 
(not necessarily disjoint) is unbounded. We provide a quadratic lower bound in the number of pairs of intersecting sets in a shattered family of convex sets in the plane.
We also show that the \VC-dimension is unbounded for pairwise disjoint convex sets in $\Re^d$, 
for $d\geq 3$.
We focus on, possibly intersecting, segments in the plane and determine that the
\VC-dimension is at most $5$. And this is tight, 
as we construct a set of five segments that can be shattered. 
We give two exemplary applications. One for a geometric set
cover problem and one for a range-query data structure problem, to
motivate our findings.
\begin{center}\includegraphics[scale=0.74]{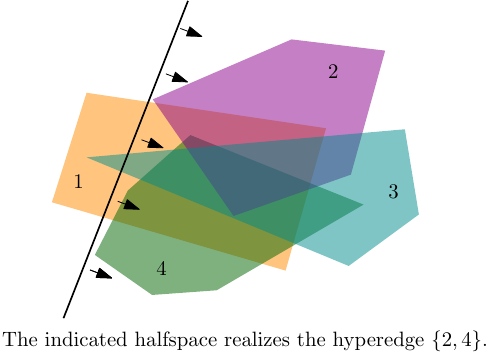}\end{center}
\end{abstract}

\section{Introduction}
Geometric hypergraphs (also called range-spaces) are central objects in computational geometry, statistical learning theory,
combinatorial optimization, linear programming, discrepancy theory,
databases and several other areas in mathematics and computer
science.

In most of these cases, we have a finite set $P$ of points in $\Re^d$ and a family of simple geometric regions, such as say, the family of all halfspaces in $\Re^d$. Then we consider the combinatorial structure of the set system $(P,\{h\cap P\})$
where $h$ is any halfspace. A key property that such hypergraphs have is the so-called bounded \VC-dimension (see later in this section for exact definitions). More precisely, when the underlying family consists of points, the \VC-dimension of the corresponding graph is at most $d+1$. Many optimization problems can be formulated on such structures.
In this paper we initiate the study of a more complicated structure by allowing the underlying set of vertices to be arbitrary convex sets and not just points.
We show that when the underlying family consists of pairwise disjoint convex sets in the plane then
the corresponding hypergraph has \VC-dimension at most $3$ and this is tight. In this case the bound on the \VC-dimension is the same as for points, however we explain later why the proof for pairwise disjoint convex sets has to be more technical. We also show that when the sets may have intersection, then the \VC-dimension is unbounded. Moreover, we prove that even for pairwise disjoint convex sets in $\Re^d$ the \VC-dimension is unbounded already for~$d\geq 3$. This is in sharp contrast to the situation when the underlying family consists of points.

We note that many deep results that hold for arbitrary hypergraphs with bounded \VC-dimension readily apply to such hypergraphs. This includes, e.g., bounds on the discrepancy of such hypergraphs, bounds of $O(\frac{1}{\eps^2})$ on the size of $\eps$-approximations and also bounds on matchings or spanning trees with (so-called) low crossing numbers (see, e.g., \cite{CW89,Mat95,MWW93,LLS01}).

\paragraph*{Preliminaries and Previous Work}\label{sec:prelim}
A hypergraph $H=(V,\E)$ is a pair of sets such that $\E \subseteq
2^V$. A geometric hypergraph is one that can be realized in a
geometric way. For example, consider the hypergraph $H = (V,\E)$,
where $V$ is a finite subset of $\Re^d$ and $\E$ consists of all
subsets of $V$ that can be cut-off from $V$ by intersecting it
with a shape belonging to some family of ``nice'' geometric shapes,
such as the family of all halfspaces. See Figure~\ref{fig:Circle-rangespace},
for an illustration of a hypergraph induced by points in the plane with respect to disks.

\begin{figure}[htbp]
    \centering
    \includegraphics[scale=0.8]{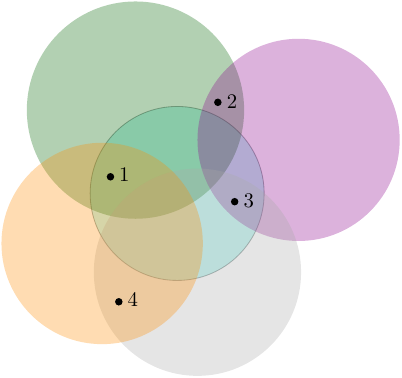}
    \caption{This is a representation of the hypergraph $H = (V, \E)$, with $V= \{1,2,3,4\}$ and for each hyperedge in \E{} of size $2$, a disk realizing the hyperedge is drawn. Note that \E{} contains all subsets of  $V$, except $\{2,4\}$.}
    \label{fig:Circle-rangespace}
\end{figure}

The elements of $V$ are called {\em vertices}, and the elements of $\E$ are called {\em hyperedges}.

We consider the following kinds of geometric hypergraphs:
Let $\C$ be a family of convex sets in $\Re^2$ (or, in general, in $\Re^d$). We say that a subfamily $S \subseteq \C$ is
{\em realized} if there exists a halfspace $h$ such that
$S=\{C \in \C \mid C \subset h \}$. In words, there exists a halfspace $h$ such that the subfamily of $\C$ of all sets that are fully contained in $h$ is exactly $S$.  We refer to the hypergraph $H=(\C,\{ S \mid \textup{S is realized}\})$ as the hypergraph induced by $\C$.
In the literature, hypergraphs
that are induced by points with respect to geometric regions of
some specific kind are also referred to as {\em range spaces}.
We start by introducing the concept of \VC-dimension.

\paragraph*{VC-dimension and $\eps$-nets}
A subset $T \subset V$ is called a \emphi{transversal} (or a \emphi{hitting set}) of a
hypergraph $H=(V,\E)$, if it intersects all sets of $\E$. The
\emphi{transversal number} of $H$, denoted by $\tau(H)$, is the
smallest possible cardinality of a transversal of $H$. The
fundamental notion of a transversal of a hypergraph is central in
many areas of combinatorics and its relatives. In computational
geometry, there is a particular interest in transversals, since
many geometric problems can be rephrased as questions on the
transversal number of certain hypergraphs. An important special
case arises when we are interested in finding a small size set $N
\subset V$ that intersects all ``relatively large'' sets of
$\E$.  This is captured in the notion of an $\eps$-net for a
hypergraph:
\begin{definition}[$\eps$-net]
    Let $H=(V,\E)$ be a hypergraph with $V$ finite. Let $\eps \in
    [0,1]$ be a real number. A set $N \subset V$ (not necessarily in
    $\E$) is called an \emphi{$\eps$-net} for $H$ if for every hyperedge $S
    \in \E$ with $|S| \geq \eps|V|$ we have $S \cap N \neq
    \emptyset$.
\end{definition}

In other words, a set $N$ is an $\eps$-net for a hypergraph $H =
(V,\E)$ if it stabs all ``large'' hyperedges (i.e., those of
cardinality at least $\eps |V|$). The well-known result of~\cite{HW-eps-net} provides a combinatorial condition
on hypergraphs that guarantees the existence of small $\eps$-nets
(see below). This requires the following well-studied notion of the
Vapnik-Chervonenkis dimension \cite{VC71}:

\begin{definition}[\VC-dimension]
Let $H=(V,\E)$ be a hypergraph. A subset $X \subset V$ (not
necessarily in $\E$) is said to be \emphi{shattered} by $H$ if
$\left | \{X\cap S\colon  S \in \E\} \right|=2^X$. The \emphi{Vapnik-Chervonenkis
dimension}, also denoted the \emphi{\VC-dimension} of $H$, is the
maximum size of a subset of $V$ shattered by $H$.
\end{definition}

\paragraph*{Relation between $\eps$-nets and the \VC-dimension}

\cite{HW-eps-net} proved the following fundamental
theorem regarding the existence of small $\eps$-nets for
hypergraphs with small \VC-dimension.

\begin{theorem}[$\eps$-net theorem]\label{theo_epsnet}
    Let us consider $H=(V,\E)$ a hypergraph with \VC-dimension equal to $d$. For
    every $\eps \in (0,1)$, there
    exists an $\eps$-net $N \subset V$ with cardinality at most
    $\displaystyle O\pth{  \frac{d}{\eps}\log\frac{1}{\eps} }$.
\end{theorem}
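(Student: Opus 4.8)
The plan is to prove the existence of a small $\eps$-net by the \emph{probabilistic method}: I will show that a uniformly random sample of the stated size is an $\eps$-net with positive probability, using the \VC-dimension bound only through the Sauer--Shelah lemma, which converts it into a polynomial bound on the number of distinct hyperedge traces. Fix a sample size $m = c\cdot\frac{d}{\eps}\log\frac{1}{\eps}$ for a suitable absolute constant $c$, and let $N$ be a multiset of $m$ elements drawn independently and uniformly from $V$. Call a hyperedge \emph{heavy} if $\card{S}\geq\eps\card{V}$; then $N$ fails to be an $\eps$-net exactly when some heavy $S$ satisfies $S\cap N=\emptyset$, an event I denote $E_1$. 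The naive idea of taking a union bound over all heavy hyperedges fails at once, since a hypergraph on $n$ vertices can have on the order of $n^d$ distinct hyperedges (and in a geometric setting there is no finite bound to begin with). The device that circumvents this is a \emph{double sampling} argument.

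To carry it out, I would draw a second independent sample $M$ of size $m$ and consider the event $E_2$ that some heavy $S$ satisfies both $S\cap N=\emptyset$ and $\card{S\cap M}\geq\tfrac{\eps m}{2}$. The first step is a \emph{concentration lemma}: $\Prob{E_2}\geq\tfrac{1}{2}\Prob{E_1}$. Indeed, condition on $E_1$ and fix one heavy $S_0$ missed by $N$; since $S_0$ is selected using $N$ alone and is independent of $M$, the count $\card{S_0\cap M}$ is a sum of $m$ independent indicators each of mean at least $\eps$, so $\Ex{\card{S_0\cap M}}\geq\eps m$, and a Chebyshev (or Chernoff) estimate gives $\Prob{\card{S_0\cap M}\geq\tfrac{\eps m}{2}}\geq\tfrac{1}{2}$ once $m$ is large enough. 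The second step is a \emph{symmetrization lemma} that bounds $\Prob{E_2}$ directly. Here I would view $N$ and $M$ as produced by first fixing the combined multiset $W$ of $2m$ points and then splitting it uniformly at random into two halves. Conditioned on $W$, the event depends only on the traces $\{S\cap W\}$, of which the Sauer--Shelah lemma guarantees at most $\sum_{i=0}^{d}\binom{2m}{i}=O\pth{(2m)^d}$ distinct ones. For any fixed trace containing at least $\tfrac{\eps m}{2}$ points, the probability that all of them are assigned to $M$ and none to $N$ is at most $2^{-\eps m/2}$, so a union bound over the traces yields $\Prob{E_2}\leq O\pth{(2m)^d}\cdot 2^{-\eps m/2}$.

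Combining the two steps gives $\Prob{E_1}\leq 2\,\Prob{E_2}\leq O\pth{(2m)^d}\cdot 2^{1-\eps m/2}$. Taking logarithms, this is below $1$ as soon as $\tfrac{\eps m}{2}$ exceeds $d\log(2m)$ by a constant, which holds for $m=c\cdot\frac{d}{\eps}\log\frac{1}{\eps}$ with $c$ a sufficiently large absolute constant; hence $\Prob{E_1}<1$ and an $\eps$-net of the claimed cardinality must exist. The main obstacle, and the genuinely clever point of the argument, is the passage from $E_1$ to the symmetrized event $E_2$: it is exactly this step that replaces the uncontrolled number of heavy hyperedges by the finite, polynomially many traces on the $2m$ sampled points, thereby making the union bound affordable. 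The concentration estimate and the final optimization of $m$ are then routine.
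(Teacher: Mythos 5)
The paper does not prove this theorem; it is quoted as a known result and attributed to Haussler and Welzl \cite{HW-eps-net}, so there is no in-paper proof to compare against. Your proposal is a correct reconstruction of exactly that source's argument --- the double-sampling/symmetrization scheme with the Sauer--Shelah trace bound and the final union bound over traces of the $2m$-point sample --- and the steps you label as routine (the Chebyshev estimate for $\Prob{E_2}\geq\tfrac{1}{2}\Prob{E_1}$ and the choice of the constant $c$) do go through as claimed.
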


In fact, it can be shown that a random sample of vertices of size $O(\frac{d}{\eps}\log\frac{1}{\eps})$ is an
$\eps$-net for $H$ with a positive constant probability~\cite{HW-eps-net}.

Many hypergraphs studied in computational geometry and learning theory
have a ``small'' \VC-dimension,
where by ``small'' we mean a constant independent of
the number of vertices of the underlying hypergraph.
In general, range spaces involving semi-algebraic sets of
constant description complexity, i.e., sets defined as a Boolean combination of
a constant number of polynomial equations and inequalities of constant maximum degree,
have finite \VC-dimension. Halfspaces, balls, boxes, etc. are examples of
ranges of this kind; see, e.g.,~\cite{MATOUSEK,PA95} for more details.

Thus, by Theorem~\ref{theo_epsnet}, these hypergraphs admit ``small'' size $\eps$-nets.
\cite{KPW} proved that the bound
$O(\frac{d}{\eps}\log\frac{1}{\eps})$ on the size of an $\eps$-net
for hypergraphs with \VC-dimension $d$ is best possible. Namely,
for a constant $d$, they construct a hypergraph $H$ with
\VC-dimension $d$ such that any $\eps$-net for $H$ must have
size of at least $\Omega(\frac{1}{\eps}\log\frac{1}{\eps})$. Recently, several breakthrough results provided better (lower and upper) bounds on the size of $\eps$-nets in several special cases \cite{Alon-nets,AES09,PachT11}.

In summary, the \VC-dimension is a central notion
in many areas. It proved to be a useful concepts with many
applications. To the best of our knowledge the \VC-dimension
has not been studied for the geometric hypergraphs 
introduced in this paper.

\paragraph*{Results}
We look at a selection of natural geometric hypergraphs 
that arise in our setting.
Our main contribution is to determine its \VC-dimension precisely,
in all cases that we consider.

\begin{theorem} \label{thm:mainConvexIntersecting}
For convex sets in the plane, possibly intersecting, the \VC-dimension is unbounded.
\end{theorem}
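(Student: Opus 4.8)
The plan is to prove the statement by explicit construction: for every $n$ I will exhibit a family $\C = \{C_1,\dots,C_n\}$ of convex sets in $\Re^2$ that is shattered by halfplanes, which forces the \VC-dimension to be at least $n$ and hence unbounded. The cleanest way to reason is through support functions. For a unit direction $u$, a convex set $C$ is contained in the halfplane $\{x : \langle x,u\rangle \le t\}$ if and only if $h_C(u):=\max_{x\in C}\langle x,u\rangle \le t$. Thus, fixing a direction $u$ and sweeping the threshold $t$, the subfamilies realized by halfplanes with outer normal $u$ are exactly the prefixes of the ordering of $C_1,\dots,C_n$ by the values $h_{C_i}(u)$. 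To shatter $\{C_1,\dots,C_n\}$ it therefore suffices to guarantee that for every subset $A\subseteq\{1,\dots,n\}$ there is a single direction $u_A$ whose ordering puts all of $A$ strictly before its complement, i.e. $\max_{i\in A} h_{C_i}(u_A) < \min_{j\notin A} h_{C_j}(u_A)$.

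To arrange this I would encode the $2^n$ subsets by $2^n$ directions and build each set out of ``spikes''. Enumerate the subsets as $A_1,\dots,A_N$ with $N=2^n$, set $\theta_k=2\pi k/N$ and $u_k=(\cos\theta_k,\sin\theta_k)$, fix a radius $R>0$, and define
$$C_i = \mathrm{conv}\Big(\{o\}\cup\{R\,u_k : i\notin A_k\}\Big),$$
where $o$ is the origin; so $C_i$ grows a spike of length $R$ in direction $u_k$ precisely for those subsets $A_k$ that omit $i$. Every $C_i$ contains $o$, so the family is highly intersecting (consistent with the quadratic lower bound on intersecting pairs advertised in the introduction), while the spikes give each set a controllable reach in each of the directions $u_k$.

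The crux is to check that direction $u_k$ separates $A_k$ from its complement. Computing support functions, $h_{C_i}(u_k)=\max\{\,0,\ \max_{k':\,i\notin A_{k'}} R\cos(\theta_k-\theta_{k'})\,\}$. If $i\notin A_k$ then the index $k'=k$ occurs and contributes $R\cos 0 = R$, so $h_{C_i}(u_k)=R$. If $i\in A_k$ then $k'=k$ is absent, and every competing direction $u_{k'}$ (with $i\notin A_{k'}$, $k'\neq k$) satisfies $\cos(\theta_k-\theta_{k'})\le\cos(2\pi/N)<1$ because cosine is decreasing on $[0,\pi]$ and the nearest other direction lies at angular distance $2\pi/N$; hence $h_{C_i}(u_k)\le R\cos(2\pi/N)<R$. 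Consequently any threshold $t$ with $R\cos(2\pi/N)<t<R$ yields a halfplane containing exactly $\{C_i : i\in A_k\}$, so $A_k$ is realized. As $k$ ranges over all subsets, $\{C_1,\dots,C_n\}$ is shattered and the \VC-dimension is at least $n$.

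The step I expect to be the only genuine obstacle is precisely this interference control: a priori the spikes a set carries for \emph{nearby} subsets could let it reach almost $R$ in direction $u_k$ even when $i\in A_k$, collapsing the separation. What rescues the argument is the use of equally spaced directions together with a common spike length $R$: the nearest competing spike sits at angular distance $2\pi/N$, so its contribution is only $R\cos(2\pi/N)$, strictly below $R$ by a positive (if tiny) margin that leaves room for the threshold. Everything else is routine — that the $C_i$ are genuinely convex and nonempty, and that $o$ lies in every halfplane used (since $t>0$), so the shared point $o$ never interferes.
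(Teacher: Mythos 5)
Your proof is correct and is essentially the paper's construction in disguise: the paper places one point $p_I$ on a circle for each non-trivial subset $I\subset[n]$, lets $C_j$ be the convex hull of the points with $j\in I$, and realizes a subfamily by cutting off the single point indexed by the complementary subset --- exactly the role played by your spike $Ru_k$ for $A_k$ (your indexing is the complement of the paper's). The support-function computation with the explicit threshold $R\cos(2\pi/N)<t<R$ is just a quantitative rendering of the paper's qualitative ``take a line separating $p_J$ from all other points,'' and the added origin and the two spikes for the trivial subsets are harmless extras.
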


\begin{theorem} \label{thm:mainConvexSpace}
For convex disjoint sets in $\Re^d$, for $d\geq 3$, the \VC-dimension is unbounded.    
\end{theorem}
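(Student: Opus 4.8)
The plan is to deduce the statement from \theoremref{thm:mainConvexIntersecting} by a lifting argument: I will turn an arbitrary shattered family of (possibly intersecting) convex sets in the plane into an equally large shattered family of \emph{pairwise disjoint} convex sets in $\Re^3$, and then pass to $\Re^d$ for the general case. The guiding idea is that in $\Re^3$ there is a spare dimension in which overlapping planar sets can be pushed apart at no combinatorial cost, as long as we only ever use halfspaces whose bounding plane is \emph{vertical}, since such halfspaces are blind to the height coordinate.

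Concretely, fix $m$ and, using \theoremref{thm:mainConvexIntersecting}, take a planar family $\C = \{C_1,\dots,C_N\}$ of convex sets containing a shattered subfamily $X = \{C_1,\dots,C_m\}$. Thus for every $Y \subseteq X$ there is a halfplane $h_Y = \{p \in \Re^2 : \langle a_Y, p\rangle \le b_Y\}$ realizing $Y$, i.e.\ with $C_i \subseteq h_Y \iff C_i \in Y$ for all $i \le m$. First I would lift: choose $N$ distinct heights $\zeta_1,\dots,\zeta_N \in \Re$ and set $C_i' := C_i \times \{\zeta_i\} \subset \Re^3$. Each $C_i'$ is convex, and because the $\zeta_i$ are pairwise distinct the $C_i'$ are pairwise disjoint, even when the original $C_i$ and $C_j$ intersected (or coincided) in the plane.

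The key step is then the containment identity. For $Y \subseteq X$ define the vertical halfspace $H_Y := \{(x,y,z)\in\Re^3 : \langle a_Y,(x,y)\rangle \le b_Y\}$, a genuine halfspace of $\Re^3$ with horizontal normal $(a_Y,0)$. Since $H_Y$ imposes no condition on $z$, one checks that $C_i' \subseteq H_Y$ holds if and only if every point of $C_i$ lies in $h_Y$, that is, if and only if $C_i \subseteq h_Y$; the height $\zeta_i$ plays no role. Hence $\{C_i' \in X' : C_i' \subseteq H_Y\} = Y'$, the lift of $Y$, so every subset of $X' = \{C_1',\dots,C_m'\}$ is realized and $X'$ is shattered by halfspaces in $\Re^3$. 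As $m$ was arbitrary, the \VC-dimension of pairwise disjoint convex sets in $\Re^3$ is unbounded. For $d>3$ the construction sits inside a coordinate $3$-flat of $\Re^d$, and extending each $H_Y$ to the halfspace of $\Re^d$ defined by the same inequality on the first three coordinates realizes exactly the same subsets; thus the \VC-dimension is nondecreasing in $d$, and the case $d=3$ suffices.

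I do not anticipate a serious obstacle. The only thing that genuinely needs verification is the containment identity $C_i' \subseteq H_Y \iff C_i \subseteq h_Y$, i.e.\ that lifting to distinct heights buys disjointness without disturbing any realized set; once this is in hand, both the disjointness and the shattering are immediate. In effect all the combinatorial difficulty has been outsourced to \theoremref{thm:mainConvexIntersecting}, and the content of this proof is the observation that disjointness comes for free in three dimensions.
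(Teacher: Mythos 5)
Your proposal is correct and follows essentially the same route as the paper: the paper also lifts the shattered planar family from Theorem~\ref{thm:mainConvexIntersecting} by sending $C_i$ to height $i$ (your $\zeta_i$), observes that distinct heights force disjointness, and realizes every subset with vertical halfspaces that ignore the $z$-coordinate, with the case $d>3$ handled identically. Your write-up is just a more explicit version of the paper's argument, including the containment identity that the paper leaves implicit.
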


\begin{theorem} \label{thm:mainConvexDisjoint}
For convex disjoint sets in $\Re^2$ the \VC-dimension is at most 3 and this is tight.
\end{theorem}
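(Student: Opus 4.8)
The plan is to establish the two bounds separately. For tightness it suffices to exhibit three pairwise disjoint convex sets that are shattered: three points in general position (equivalently, three small pairwise disjoint disks whose centers form a triangle) will do. For such a configuration each of the $8$ subsets is realized, namely the empty set by a halfplane disjoint from all three, the full set by a halfplane containing all three, each singleton by a halfplane cutting off one vertex, and each pair by a halfplane containing the corresponding edge while the third set pokes out. Hence the \VC-dimension is at least $3$.

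For the upper bound I would show that no four pairwise disjoint convex sets $C_1,\dots,C_4$ can be shattered. Since shattering a subfamily only constrains the behaviour of halfplanes on that subfamily, I may discard the rest of $\C$ and work with the four sets alone. The key is to pass to support functions: for a unit vector $u$ write $f_i(u)=\max_{x\in C_i} u\cdot x$, so that $C_i$ is contained in the halfplane $\{x : u\cdot x \le b\}$ exactly when $f_i(u)\le b$. Thus a subset $Y$ is realized iff there is a direction $u$ and a threshold $b$ with $f_i(u)\le b$ for $i\in Y$ and $f_j(u)>b$ for $j\notin Y$; equivalently, iff at some $u$ the sets of $Y$ occupy the $|Y|$ lowest values among $f_1(u),\dots,f_4(u)$. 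So shattering is equivalent to saying that, as $u$ runs over the circle $S^1$, every subset arises as the ``bottom-$r$'' set of the four graphs $f_1,\dots,f_4$ for the appropriate $r$.

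The decisive geometric input, and the place where disjointness enters, is the lemma that \emph{for two disjoint compact convex sets $A,B$ the set $\{u\in S^1 : f_A(u)>f_B(u)\}$ is a single arc}, so the graphs of $f_A$ and $f_B$ cross at most twice. Indeed $f_A(u)=f_B(u)$ means the supporting line of $\mathrm{conv}(A\cup B)$ with outer normal $u$ touches both $A$ and $B$; by convexity the segment joining the two contact points then lies on $\partial\,\mathrm{conv}(A\cup B)$, i.e.\ it is one of the two ``bridges'' joining $A$ to $B$ along that boundary. As there are exactly two such bridges there are at most two such directions, and $f_A-f_B$ changes sign only there. I expect this lemma to be the main obstacle: it is precisely the feature that fails for intersecting sets (two crossing thin ellipses produce four crossings), explaining why the intersecting case is unbounded, and making the bridge argument clean for arbitrary closed convex sets is the technical heart.

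Finally I would combine the lemma with a counting argument. After a generic perturbation keeping the sets disjoint, which preserves the finitely many strict inequalities witnessing a shattering, all crossings are simple and occur at distinct directions, and by the lemma the four graphs have at most $2\binom{4}{2}=12$ crossings in total. Classify each crossing by the two adjacent ranks it swaps and let $a,b,c$ count the crossings at ranks $(1,2)$, $(2,3)$, $(3,4)$, so $a+b+c\le 12$. The minimum (bottom-$1$) set changes precisely at the rank-$(1,2)$ crossings, and realizing all four singletons forces its cyclic sequence of values to contain all four sets, hence at least four changes, so $a\ge4$; symmetrically the maximum set gives $c\ge4$; and the bottom-$2$ set changes precisely at the rank-$(2,3)$ crossings, so realizing all six pairs forces $b\ge6$. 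Then $a+b+c\ge 14>12$, a contradiction. Hence four disjoint convex sets cannot be shattered and the \VC-dimension is at most $3$, matching the lower bound.
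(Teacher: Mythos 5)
Your proposal is correct in its essentials, but it follows a genuinely different route from the paper's. The paper proves the upper bound (Lemma~\ref{lem:disjoint-convex-plane}) by first showing that every member of a shattered family must meet the boundary of the convex hull of the family (Lemma~\ref{lemma:convex hull}), and then running a topological separation argument: the four boundary points $p_1,\dots,p_4$ lie in convex position, the halfplane realizing $\{2,4\}$ forces set $1$ to reach a point $q$ below its bounding line, and the resulting curve through $p_1, q, p_3$ separates set $2$ from set $4$ inside the hull, so no halfplane can realize $\{1,3\}$. You instead work in the dual circle of directions via support functions. Your key lemma---that two \emph{disjoint} compact convex sets have exactly two directions where their support functions agree, namely the normals of the two bridges of $\partial\,\mathrm{conv}(A\cup B)$---is correct, and you rightly identify it as the precise point where disjointness enters and where intersecting sets fail. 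The rank-counting $a+b+c\ge 4+6+4=14>12\ge 2\binom{4}{2}$ is sound; in fact it shows that $n$ pairwise disjoint convex sets realize at most $2\binom{n}{2}+2=n(n-1)+2$ subsets, so your method is really the disjoint-convex-set analogue of the paper's tangent-counting Lemma~\ref{lem:Tangent-Argument} for segments (which gives $2n(n-1)+2$ for possibly crossing segments) rather than of its proof of Lemma~\ref{lem:disjoint-convex-plane}; this is a nice unification that the paper does not make. Two points to tighten: first, the genericity step should either be carried out explicitly (a small perturbation preserving both disjointness and the finitely many strict witness inequalities needs the sets to be, say, compact), or avoided by observing that an event where $k$ support functions coincide consumes $\binom{k}{2}\ge k-1$ pairwise crossings from the budget of $12$ while changing each of the three level sets at most once, so the count survives coincident crossings; second, both your argument and the paper's implicitly assume the convex sets are bounded and closed, which is worth stating. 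Your tightness construction (three points, equivalently three small disks) matches the paper's example in Figure~\ref{fig:Shatter-three}.
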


\begin{theorem} \label{thm:mainSegments}
For segments in the plane, possibly intersecting, the \VC-dimension is at most 5 and this is tight.
\end{theorem}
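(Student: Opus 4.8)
The plan is to recast shattering in terms of the \emph{support functions} of the segments, and then run a counting argument against a sharp bound on how often two such functions cross; tightness is handled by an explicit construction.

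First I would record the reformulation. Write $s_i$ for the segments, with endpoints $p_i,q_i$, and for a direction $u\in S^1$ set $f_i(u)=\max(\langle u,p_i\rangle,\langle u,q_i\rangle)$. Since a convex set lies in the halfplane $\{x:\langle u,x\rangle\le b\}$ iff its support value is $\le b$, the subfamily realized by such a halfplane is exactly a prefix of the order of the $f_i(u)$, namely $\{i:f_i(u)\le b\}$. Hence a subset $Y$ is realized iff there is a $u$ with $\max_{i\in Y}f_i(u)<\min_{j\notin Y}f_j(u)$, i.e. iff $Y$ consists of the $|Y|$ smallest support values in some direction. Because the trace of a realized set on a subfamily depends only on that subfamily, for the VC bound it suffices to take $\C$ to be the shattered segments themselves: $X$ is shattered iff for every $Y\subseteq X$ there is a $u$ with $Y=B_{|Y|}(u)$, where $B_k(u)$ denotes the set of the $k$ smallest support values at $u$. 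I would also note that realizability is a strict (open) condition, so shattering is stable under small perturbations and I may assume general position (no two $f_i$ tangent, no triple coincidences).

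For the upper bound the crux is the lemma: the support functions of two segments cross at most four times as $u$ runs over $S^1$. To prove it I lift $f_i,f_j$ to homogeneous piecewise-linear functions on $\Re^2$; each has a single crease line through the origin (the directions $u$ with $\langle u,p_i-q_i\rangle=0$), so the two crease lines cut the plane into at most four cones, on each of which $f_i-f_j$ is linear and hence vanishes on at most one ray from the origin. Thus $\{f_i=f_j\}$ is at most four rays, i.e. at most four directions (this is attained by a vertical and a horizontal segment, where $|\sin\theta|=|\cos\theta|$). Now suppose six segments were shattered. For each level $k$, as $u$ makes a full turn the set $B_k(u)$ changes only when the rank-$k$ and rank-$(k{+}1)$ support values cross, and $u\mapsto B_k(u)$ is periodic, so it takes at most $N_k$ distinct values, where $N_k$ is the number of crossings occurring at level $k$. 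Shattering forces all $\binom{6}{k}$ subsets of size $k$ to occur, so $N_k\ge\binom{6}{k}$. In general position every pairwise crossing happens at exactly one level, so $\sum_{k=1}^{5}N_k$ is the total number of crossings, which is at most $4\binom{6}{2}=60$; but $\sum_{k=1}^{5}\binom{6}{k}=2^6-2=62>60$, a contradiction. Hence no six segments are shattered and the VC-dimension is at most $5$.

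For tightness I would exhibit an explicit family of five segments (the configuration in the title figure) and verify that all $32$ subsets are realizable. Rather than writing down $32$ halfplanes, I would rotate $u$ through $[0,2\pi)$, track the circular sequence of orderings of the $f_i(u)$, and read off the prefixes that appear, checking that every subset occurs; choosing a symmetric configuration cuts this bookkeeping down by the symmetry factor. The counting is comfortably feasible here, since $\sum_{k=1}^{4}\binom{5}{k}=30$ crossings are needed against $4\binom{5}{2}=40$ available, which is exactly why five is attainable while six is not. The main obstacle is this lower-bound construction: one must pick the five segments so that the induced circular sequence of orderings produces precisely the elusive middle-sized subsets, and then carry out the finite but delicate verification that none is skipped. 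The crossing-count lemma, though short, is the other delicate point, since the whole upper bound turns on the exact constant $4$.
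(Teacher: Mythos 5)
Your upper bound is correct and reaches the paper's conclusion by a genuinely different route. The paper (Lemma~\ref{lem:Tangent-Argument}) canonicalizes each realizing halfplane by translating and then rotating its boundary until it becomes a common tangent touching an endpoint of a segment in $S'$ and an endpoint of a segment outside $S'$, and then checks by a case analysis on pictures that each pair of segments admits at most four such configurations, giving at most $2n(n-1)+2$ realizable subsets against $2^n$. You instead dualize: the subsets realizable in direction $u$ are the prefixes of the order of the support values $f_i(u)$, the bound of four crossings per pair of support functions falls out of the cone decomposition (two crease lines cut $\Re^2$ into four cones of angle less than $\pi$, each carrying at most one zero ray of the linear difference), and the level-by-level circular-sequence count $\sum_{k=1}^{5} N_k \le 4\binom{6}{2}=60 < 62=\sum_{k=1}^{5}\binom{6}{k}$ finishes the job. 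The two arguments count the same objects---your crossings of $f_i$ and $f_j$ are exactly the paper's same-side common tangents---but your derivation of the constant $4$ replaces the figure-based case analysis and the somewhat delicate termination claim for the rotation process with a short algebraic argument, at the cost of needing the general-position hypotheses (transversal crossings, no triple coincidences, each crossing at exactly one level) to be stated and justified by the openness of the strict realizability condition; both write-ups need the same perturbation step.

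The tightness half, however, is not actually proved in your proposal: you explain how you \emph{would} verify a five-segment configuration by tracking its circular sequence, and you correctly note that the counting leaves room ($30$ needed versus $40$ available), but you never exhibit the five segments, and the counting slack by itself does not imply existence. The paper commits to a concrete symmetric configuration (Lemma~\ref{lemma:five segments}, Figure~\ref{fig:five-Segments}) and displays realizing halfplanes for all subsets up to symmetry. Until you write down an explicit family and carry out the finite check, the ``and this is tight'' clause remains unestablished.
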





In order to show the relevance of our findings to the
field of algorithms, we give two simple exemplary 
applications that follow
easily together with previous work.

\paragraph*{Algorithmic Applications}

For our first expository application, we consider a natural
hitting set problem.
\begin{definition}[Hitting Halfplanes with segments]
    Given a set $H$ of halfplanes and a set $S$ of segments, the 
    \emph{halfspace-segment-hitting set problem} asks for a minimum 
    set $T\subset S$,
    such that every halfplane $h\in H$ contains at least
    one segment $t\in T$ entirely.
    This is an optimization problem, where we try to minimize 
    the size of $T$.
\end{definition}

Using the framework of~\cite{bronnimann1995almost}, 
we get the following theorem.
\begin{corollary}
    There is an $O(\log c)$-approximation algorithm
    for the halfspace-segment-hitting set problem,
    where $c$ is the size of the optimal solution.
\end{corollary}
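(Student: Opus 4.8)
The plan is to recast the halfspace-segment-hitting set problem as a minimum hitting set problem on a range space of bounded VC-dimension, and then invoke the framework of Br\"onniman and Goodrich~\cite{bronnimann1995almost}. Concretely, I would take the ground set to be the given set of segments $S$, and for each halfplane $h\in H$ introduce the range $R_h=\{s\in S : s\subseteq h\}$, i.e. the set of all segments entirely contained in $h$. A subset $T\subseteq S$ is a feasible solution of the halfspace-segment-hitting set problem precisely when $T\cap R_h\neq\emptyset$ for every $h\in H$. Thus feasible solutions are exactly the hitting sets of the range space $\Sigma=(S,\{R_h : h\in H\})$, and the optimum $c$ is the minimum hitting-set size of $\Sigma$.

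First I would bound the VC-dimension of $\Sigma$. The ranges $R_h$ are exactly the hyperedges of the hypergraph induced by the family of segments $S$ in the sense of the paper's definition, restricted to the finitely many halfplanes occurring in $H$. Since restricting to a subcollection of ranges can only decrease the VC-dimension, \theoremref{thm:mainSegments} immediately gives $\mathrm{VC\text{-}dim}(\Sigma)\le 5$, a constant independent of $|S|$ and $|H|$.

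Next, for the Br\"onniman--Goodrich machinery I need two algorithmic ingredients. First, small $\eps$-nets computable in polynomial time: because the VC-dimension is at most $5$, Theorem~\ref{theo_epsnet} (in its weighted / random-sampling form) guarantees that a sample of $O(\frac{1}{\eps}\log\frac{1}{\eps})$ segments, drawn according to the current weights, is an $\eps$-net with constant probability, so such a net is found in polynomial time; the combinatorial VC-dimension bound is unaffected by the weighting, so this applies verbatim to weighted instances. Second, a verification oracle: given a candidate $N\subseteq S$, decide whether $N$ hits every $R_h$ and, if not, return a violated halfplane. This is done by testing, for each $h\in H$ and each $s\in N$, whether both endpoints of $s$ lie on the correct side of the line bounding $h$, which costs $O(|H|\,|N|)$ time. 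Crucially, a $\tfrac{1}{2c}$-net already stabs every range whose weight is at least a $\tfrac{1}{2c}$-fraction of the total, so any halfplane failing to contain a segment of $N$ is automatically ``light,'' which is exactly what the reweighting step requires.

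Given these ingredients, the Br\"onniman--Goodrich iterative reweighting algorithm outputs a hitting set of size $O(d\,c\log c)=O(c\log c)$ for the constant $d=5$, that is, an $O(\log c)$-approximation; the usual doubling search over the unknown value of $c$ (running the procedure for guesses $c=1,2,4,\dots$ and returning the first success) removes the need to know the optimum in advance while keeping the running time polynomial. I expect the only genuine content to be the identification of the correct range space and the observation that the quantity bounded in \theoremref{thm:mainSegments} is precisely its VC-dimension; once that is in place the remaining steps are a direct and standard application of the framework, so the ``hard part'' is merely the bookkeeping confirming that the weighted $\eps$-net finder and the verification oracle satisfy the framework's hypotheses.
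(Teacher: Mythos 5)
Your proposal is correct and follows exactly the route the paper takes: the paper's proof is a one-line citation of Theorem~\ref{thm:mainSegments} together with the Br\"onniman--Goodrich framework, and your write-up simply fills in the standard details (the range space $(S,\{R_h\})$, the weighted $\eps$-net finder, the verification oracle, and the doubling search over $c$) that the paper leaves implicit. No gaps.
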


\begin{proof}
We use Theorem~\ref{thm:mainSegments} for segments in the plane and the framework from~\cite{bronnimann1995almost}.
\end{proof}

As a second expository application, we have to introduce the 
problem of approximate range counting. 
Given a family of sets $O$ and a halfplane $h$, we denote by
$n(h,O)$ the number of sets fully contained in $h$.
We denote by $r(h,O) = \frac{n(h,O)}{|O|}$ the relative 
number of sets.
We want to construct a data structure that
reports a number $t$ such that
\[|r(h,O) - t| < \varepsilon,\]
for some given $\varepsilon$.
Thus, here we allow an absolute error rather 
than a relative error.
A simple way is to construct a data structure
is to sample a family of sets $P\subseteq O$ 
and query
how many objects of $P$ are fully contained inside $h$.
If this set $P$ is small we can do queries fast.
\begin{corollary}
    Let  $O$ be a set of disjoint convex objects in the plane.
    Then there
    exists a set $P\subseteq O$ of size
     $O(\frac{1}{\varepsilon^2})$ such that
    $|r(h,O) - r(h,P)| < \varepsilon$.
\end{corollary}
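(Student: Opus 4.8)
The plan is to recognize the set $P$ we seek as precisely an \emph{$\varepsilon$-approximation} of the hypergraph $H=(O,\E)$ induced by $O$ with respect to halfplanes, and then to combine Theorem~\ref{thm:mainConvexDisjoint} with the classical bound on the size of $\varepsilon$-approximations for hypergraphs of bounded \VC-dimension.

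First I would unwind the definitions. For a halfplane $h$, the quantity $n(h,O)$ counts the objects of $O$ fully contained in $h$, which is exactly the cardinality of the hyperedge $e=\{C\in O \mid C\subset h\}\in\E$. The crucial observation is that full containment $C\subset h$ is a property of the individual set $C$, so passing to the sample gives $\{C\in P\mid C\subset h\}=e\cap P$. Hence $n(h,P)=|e\cap P|$, and the desired inequality $|r(h,O)-r(h,P)|<\varepsilon$ is precisely
\[\left| \frac{|e|}{|O|} - \frac{|e\cap P|}{|P|}\right| < \varepsilon.\]
Since every hyperedge of $H$ arises from some halfplane, demanding this for all halfplanes is the same as demanding it for all hyperedges of $H$, i.e. $P$ being an $\varepsilon$-approximation of $H$.

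Next, Theorem~\ref{thm:mainConvexDisjoint} tells us that for disjoint convex objects in the plane the \VC-dimension of $H$ is at most $3$, a constant. The standard result on $\varepsilon$-approximations (see, e.g.,~\cite{LLS01}, and the discussion in the introduction) guarantees that a hypergraph of \VC-dimension $d$ admits an $\varepsilon$-approximation of size $O(d/\varepsilon^2)$; in fact a uniform random sample of this size is an $\varepsilon$-approximation with constant positive probability. Substituting $d\le 3$ yields a set $P\subseteq O$ of size $O(1/\varepsilon^2)$ with the required property.

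The only point needing care---more a matter of bookkeeping than a genuine obstacle---is the identification carried out in the first step: one must check that the objects ``fully contained in $h$'' really are the hyperedges of the hypergraph whose \VC-dimension Theorem~\ref{thm:mainConvexDisjoint} bounds, and that restricting to a subfamily $P$ does not change which objects each halfplane contains. Because full containment is decided object by object, this holds automatically, so no geometric argument beyond Theorem~\ref{thm:mainConvexDisjoint} and the generic $\varepsilon$-approximation theorem is required.
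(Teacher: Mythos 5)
Your proposal is correct and follows exactly the route the paper takes: the paper's (one-line) proof likewise combines Theorem~\ref{thm:mainConvexDisjoint} with the standard $O(d/\varepsilon^2)$ bound on $\varepsilon$-approximations from~\cite{LLS01}. You simply spell out the identification of $P$ as an $\varepsilon$-approximation of the induced hypergraph, which the paper leaves implicit.
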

\begin{proof}
    We use Theorem~\ref{thm:mainConvexDisjoint} for disjoint convex sets in the plane and the results from~\cite{LLS01}.
\end{proof}
Note that there are many different notions of approximate
range counting and we presented here a simple one.
Recall that we only want to highlight the relevance of 
our findings for algorithmic applications.

\paragraph*{Structure}
In Section~\ref{sec:non-disjoint},
we show Theorems~\ref{thm:mainConvexIntersecting}
and~\ref{thm:mainConvexSpace}.
In Section~\ref{sec:plane}, we handle the case
of disjoint sets in the plane, which shows
Theorem~\ref{thm:mainConvexDisjoint}.
In Section~\ref{sec:segments}, we show 
Theorem~\ref{thm:mainSegments}.
In Section~\ref{sec:Intersections}, we will consider the minimum number of intersections that shattered families of sets in the plane must have.

\section{Convex sets in the plane and higher dimensions}
\label{sec:non-disjoint}
In this section, we show that when the underlying convex 
sets may intersect, the \VC-dimension can be unbounded.

\begin{figure}[htbp]
    \centering
    \includegraphics[page = 2]{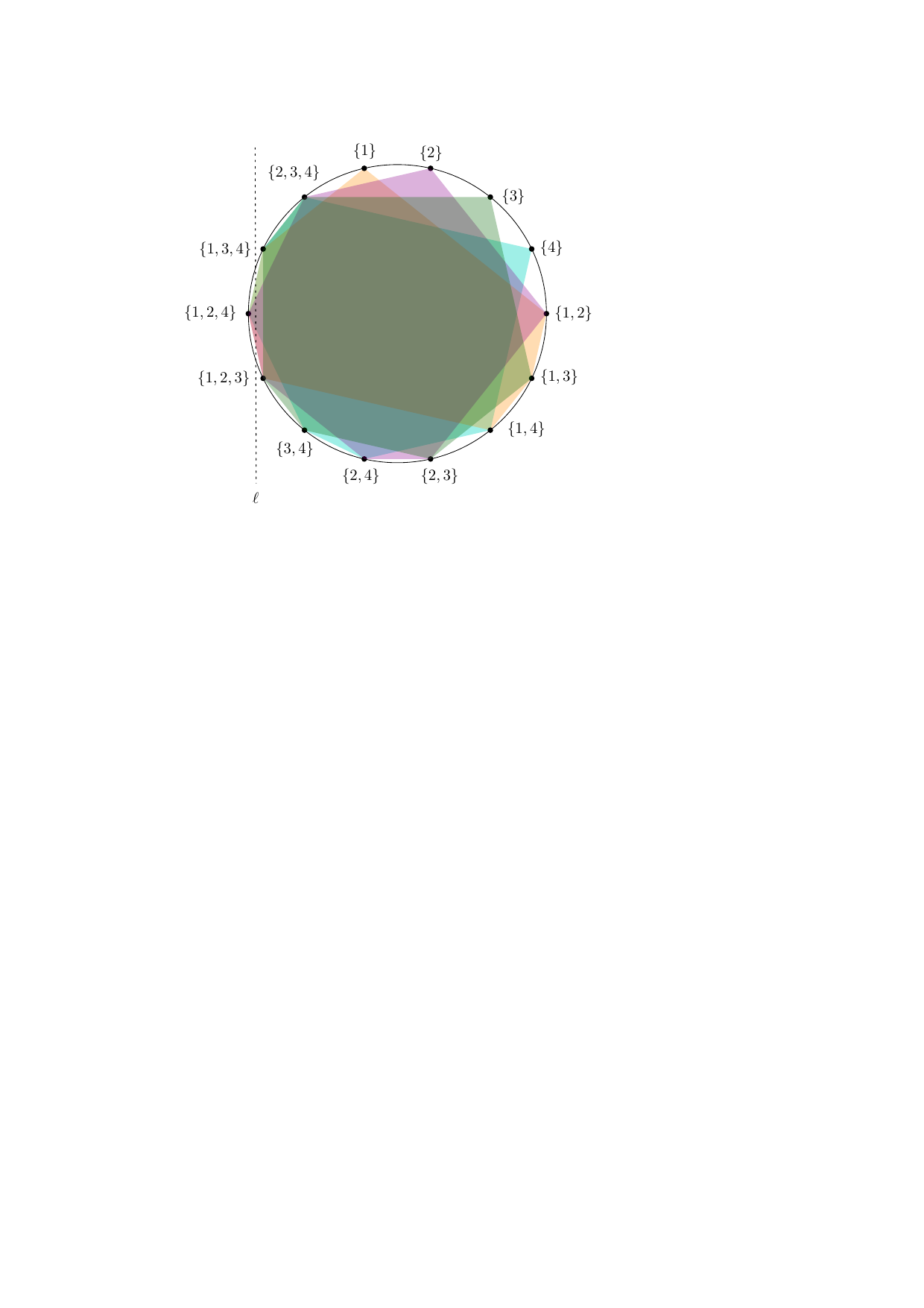}
    \caption{Illustration of the construction of unbounded
    \VC-dimension, for the case $n=3$ (top) and $n = 4$ (bottom). 
    The halfspace to the right of $\ell$ realizes the hyperedge $\{C_3\}$.}
    \label{fig:ConvexInfinitVC}
\end{figure}

\begin{proof}[of Theorem~\ref{thm:mainConvexIntersecting}]
For any $n>0$, we provide $n$ convex sets in the plane that can be shattered.
We denote $[n]=\{1,2,\ldots,n\}$. 
We place $2^n-2$ points on 
the unit circle in the plane as follows.
For every non-trivial subset $I \subset [n], I \neq \emptyset, I \neq [n]$ we place a point $p_I$ on that unit circle.
For each $j \in [n]$ we define the convex set $C_j$ 
as the convex hull
of all points $p_I$ for which $j \in I$. 
Namely $C_j= \textup{conv}(\{p_I \mid j\in I\})$. We claim that the family $\C=\{C_1,\ldots,C_n\}$ is shattered. To see this, let $S \subseteq \C$. If $S$ is either empty or the whole family $\C$ 
then it is easy to see that it is realized as there is a halfplane containing all sets and there is also a halfplane containing none of the sets. So let $I$ be the corresponding non-trivial set of indices corresponding to the members of $S$. Let us denote by $J$ the set $[n]\setminus I$. Consider a line $\ell$ that separates the point $p_J$ from all other points, see Figure~\ref{fig:ConvexInfinitVC} for an illustration. We claim that the halfplane $h$ bounded by $\ell$ and containing those points realizes the subfamily $S$. Indeed notice that for each $C_i \in S$ all points $p_K$ for which $i \in K$ are contained in $h$ so their convex hull $C_i$ is also contained in $h$. Note also that for any $C_j \notin S$ we have that $j \in J$ so $C_j$ contains the point $p_J$ and hence it is not fully contained in $h$. This shows that $S$ is realized for any $S$ and hence $\C$ is shattered.
\end{proof}

\begin{proof}[of Theorem~\ref{thm:mainConvexSpace}]
For any $n>0$, we provide $n$ disjoint convex sets in $\Re^d$ that can be shattered. Let $C_1, \dots, C_n$ be the set of convex shapes that we construct in the proof of Theorem~\ref{thm:mainConvexIntersecting}.
Map each point of $C_i$ such as $(x,y)$ to a point $(x,y,i)$ from $\Re^2$ to $\Re^3$. With this mapping, all the convex sets will be disjoint and we can still shatter these sets as before, by considering vertical halfspaces.
See Figure~\ref{fig:Dimension3} for an illustration. The case for $d>3$
follows in the same way.

\begin{figure}[htbp]
    \centering
    \includegraphics[scale=0.8]{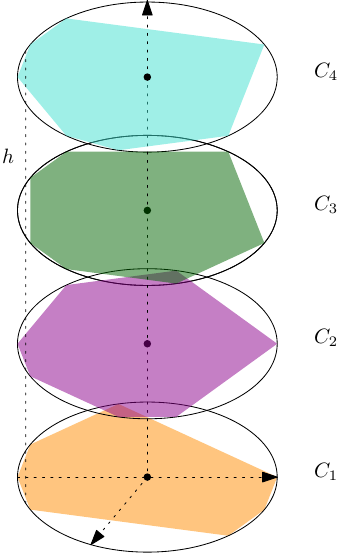}
    \caption{The sets of the two dimensional construction 
    stacked disjointly on top of one another.}
    \label{fig:Dimension3}
\end{figure}

\end{proof}

\section{Disjoint convex sets in the plane}
\label{sec:plane}
From the previous section, we know that the \VC-dimension is unbounded in the plane when shapes can be intersecting. Here we study the case where all the shapes are disjoint.
\begin{figure}
    \centering
    \includegraphics[scale=0.75]{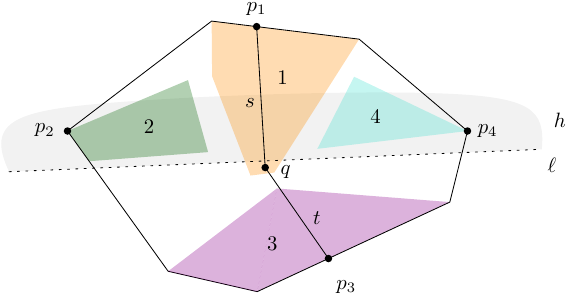}
    \caption{Notation used in the proof of Lemma~\ref{lem:disjoint-convex-plane}.}
    \label{fig:DisjointVC3}
\end{figure}

\begin{lemma}\label{lem:disjoint-convex-plane}
Let $\C$ be a family of pairwise disjoint convex sets in the plane $\Re^2$.  Then, the hypergraph induced by $\C$ has \VC-dimension at most $3$.
\end{lemma}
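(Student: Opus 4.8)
## Proof Proposal

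The plan is to suppose, for contradiction, that some four sets $C_1, C_2, C_3, C_4$ from a family of pairwise disjoint convex sets are shattered, and to derive a combinatorial obstruction from the disjointness together with the geometry of halfplane separation. The central object to exploit is the following: for each subset $S \subseteq \{C_1,C_2,C_3,C_4\}$ there must be a halfplane $h_S$ whose boundary line $\ell_S$ has all of $S$ strictly on one side and, crucially, leaves at least one point of every $C_j \notin S$ on the other side. I would first fix a generic direction and parametrize each realizing halfplane by the outward normal of its bounding line, so that realizing $S$ amounts to choosing a direction in which the members of $S$ are the extreme sets. Rotating this direction through a full turn, each set $C_i$ is ``innermost'' (i.e. contained in the halfplane) over some angular arc, and the combinatorial question becomes which patterns of arcs are simultaneously achievable.

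The key geometric step is to use disjointness to control the cyclic order. Since the sets are pairwise disjoint and convex, for any two sets $C_i, C_j$ there is a separating line, and more importantly the relation ``$C_i$ is contained in a halfplane whose boundary also just touches $C_j$'' has structure one does not get for intersecting sets — this is precisely why the intersecting construction in Theorem~\ref{thm:mainConvexIntersecting} is not available here. I would look at the two singleton hyperedges and the two ``co-singleton'' hyperedges (complements of singletons) forced by shattering: realizing $\{C_i\}$ alone requires a halfplane separating $C_i$ from the other three, while realizing $\{C_1,C_2,C_3,C_4\}\setminus\{C_i\}$ requires a halfplane containing the other three but excluding $C_i$. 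Combining four separations of the first kind and four of the second kind should over-constrain the possible cyclic arrangement of the four sets in the circle of directions, using a counting or parity argument on how many times the identity of the ``extreme'' set can change as the direction rotates.

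Concretely, I expect to reduce the problem to a statement about the boundary of the convex hull of the four sets and about supporting lines: a set $C_i$ can be realized as a singleton only if it admits a supporting direction in which it is the unique extreme set, which forces $C_i$ to contribute an arc to the convex hull of $\bigcup_j C_j$, and the cyclic sequence of such contributions limits which subsets can be ``swept off'' by a single halfplane. The main obstacle will be handling the subsets of intermediate size — the six pairs — rather than the singletons and their complements. Separating a specified pair $\{C_i,C_j\}$ from the other two by a line, for \emph{all six} pairs simultaneously within one fixed configuration, is the constraint most likely to fail, and I anticipate the cleanest contradiction comes from showing that some pair cannot be separated once the four singleton-separations have pinned down the cyclic order; a careful case analysis on the cyclic order of the four sets (up to symmetry there are only a few arrangements) should close the argument.
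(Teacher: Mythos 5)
Your overall strategy---rotate the outward normal of the bounding line and argue that disjointness leaves too few combinatorial changes to realize all $2^4$ subsets---is viable, but as written the proposal has two genuine gaps, and it remains a plan rather than a proof. First, the one place where disjointness must do work is never made precise. For a fixed inward normal $u$, the subsets realized by halfplanes $\{x:\langle x,u\rangle\ge c\}$ are exactly the prefixes of the ordering of the four sets by $\min_{x\in C_i}\langle x,u\rangle$ (note: a single set is not ``contained over an angular arc'' independently of the offset $c$, so the relevant object is this ordering, not arcs of containment). The quantitative fact you need is that two \emph{disjoint} convex sets admit only two common supporting lines having both sets on the same side, so each pair swaps its relative position in this ordering at most twice as $u$ traverses the circle, giving at most $2\binom{4}{2}=12$ swaps in total. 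You assert that disjointness ``has structure'' here but never state or prove this bound; without it the rotating argument is vacuous---for intersecting convex sets the order can swap arbitrarily often, which is exactly why Theorem~\ref{thm:mainConvexIntersecting} holds. Second, the contradiction is only anticipated (``should over-constrain'', ``I anticipate''), never derived. A clean way to finish, which you do not carry out, is a counting argument: realizing all four singletons forces the first element of the cyclic ordering to change at least $4$ times, all four triples force the last element to change at least $4$ times, and all six pairs force the size-two prefix to change at least $6$ times; since each swap affects exactly one prefix length, at least $14>12$ swaps are needed. Your alternative ending---pin down the cyclic order from the singleton separations and exhibit an unseparable pair---would require a case analysis you have not begun, and it is not clear it closes without something like the count above.

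For comparison, the paper argues quite differently and uses only two hyperedges. It first shows (Lemma~\ref{lemma:convex hull}) that in a shattered family every set must meet the boundary of the convex hull of the family; it picks such boundary points $p_1,p_2,p_3,p_4$ in convex position, takes a halfplane realizing $\{2,4\}$ to produce a point $q$ of set $1$ strictly beyond the chord $p_2p_4$, and then uses the Jordan curve theorem to show that the path $p_1qp_3$ separates set $2$ from set $4$ inside the hull (disjointness enters only to ensure sets $2$ and $4$ do not cross the segment $p_1q\subset 1$). Any halfplane realizing $\{1,3\}$ would contain the triangle $p_1qp_3$ and hence one of the separated sets, a contradiction. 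If you wish to pursue your route, supply the two-outer-tangent lemma and the swap count explicitly; otherwise the paper's argument is the shorter path.
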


\begin{figure}
    \centering
    \includegraphics[scale=0.75]{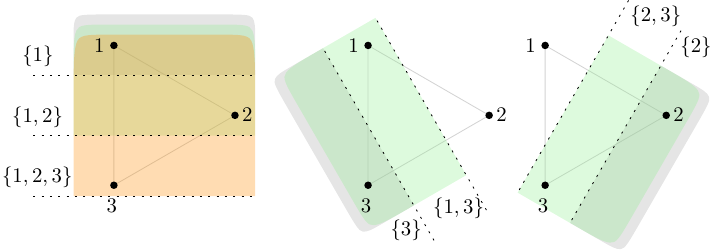}
    \caption{A set of three convex sets that can be shattered. All halfplanes, except the one which contains no convex set, are drawn in the figure.}
    \label{fig:Shatter-three}
\end{figure}

Note that it is easy to find \emph{three} 
disjoint convex sets that can be shattered, see Figure~\ref{fig:Shatter-three}.
As mentioned earlier, a hypergraph induced by a family of points in in the plane $\Re^2$ has \VC-dimension at most $3$ too. The usual proof uses Radon's theorem: Consider four points, they can be divided into two subsets $A$ and $B$ such that the convex hulls of $A$ and $B$ intersect, finally observe that no halfplane can realize $A$ nor $B$. As an illustration, in Figure~\ref{fig:DisjointVC3}, any halfplane that contains the points $p_2$ and $p_4$ must also contain $p_1$ or $p_3$. However, the halfplane denoted by~$h$ in the figure includes the sets $2$ and $4$, but does not include the sets $1$ and~$3$. Therefore $h$ realizes $\{2,4\}$. To show that no family of four pairwise disjoint convex sets is shattered by halfplanes, we need further arguments.
We first prove the following useful lemma.

\begin{lemma}[Convex Hull]\label{lemma:convex hull}
Let $\C$ be a family of sets in the plane. If $\C$ is shattered, then each set in $\C$ contains a point on the boundary of the convex hull of $\C$.
\end{lemma}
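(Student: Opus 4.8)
The plan is to argue by contradiction. Write $H := \textup{conv}\pth{\bigcup_{C\in\C} C}$ for the convex hull of the family, and suppose some set $C_j \in \C$ contains no point of $\boundary H$; equivalently, $C_j$ lies entirely in the interior of $H$. The idea I want to exploit is that a set sitting in the interior of $H$ is \emph{redundant} for the convex hull, and a halfplane that captures all of the other sets is then forced to capture $C_j$ as well. This will contradict the fact that, since $\C$ is shattered, the complementary subfamily $S = \C \setminus \{C_j\}$ must be realized.

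So first I would realize $S = \C \setminus \{C_j\}$: there is a halfplane $h = \{x : \langle x,u\rangle \le c\}$ with outer normal $u$ such that $C_i \subseteq h$ for every $i \neq j$, while $C_j \not\subseteq h$. Pick $p \in C_j$ with $\langle p, u\rangle > c$. Now consider the linear functional $x \mapsto \langle x,u\rangle$ and its maximum $M$ over $H$. Since a linear functional attains its maximum over a convex hull already on the generating set, $M = \max_i \sup_{x \in C_i}\langle x, u\rangle$. For each $i \neq j$ the set $C_i$ lies in $h$, so it contributes at most $c$; on the other hand $M \ge \langle p, u\rangle > c$. Hence the maximum is forced to come from $C_j$, i.e.\ there is $q^\ast \in C_j$ with $\langle q^\ast, u\rangle = M$. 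But any maximizer of $\langle \cdot, u\rangle$ over $H$ lies on the supporting line $\{x : \langle x,u\rangle = M\}$ and therefore on $\boundary H$. Thus $q^\ast \in C_j \cap \boundary H$, contradicting the assumption that $C_j$ avoids $\boundary H$, and the lemma follows.

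The main obstacle is not the geometric idea but a point of rigor: the argument uses that the relevant maxima are attained, that is, that $H$ is compact, which holds once the members of $\C$ are bounded (the setting of interest here). If one prefers to avoid attainment entirely, the same reasoning can be phrased through extreme points: every extreme point of $H$ belongs to some member of $\C$ and lies on $\boundary H$, so if $C_j \subseteq \textup{int}\, H$ then all extreme points come from the remaining sets, giving $\textup{conv}\pth{\bigcup_{i\neq j} C_i} = H \supseteq C_j$; the halfplane $h$ realizing $\C \setminus \{C_j\}$ would then contain $C_j$ too, the same contradiction. I would double-check in the writeup that the maximizer genuinely lands in $C_j$ (as opposed to merely being approached in the limit), which is exactly where the boundedness hypothesis is used.
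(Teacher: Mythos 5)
Your proof is correct, and it is worth contrasting with the paper's. Both arguments hinge on the same move: since $\C$ is shattered, the subfamily $\C\setminus\{C_j\}$ is realized by some halfplane $h$, and one must show $h$ is forced to swallow $C_j$ (or, equivalently, that $C_j$ reaches the boundary after all). The paper closes the argument by asserting that a set avoiding $\boundary H$ is contained in the convex hull of the remaining sets, so that any halfplane containing those sets contains $C_j$ as well; however, it leaves the implication ``$C_j\cap\boundary H=\emptyset$ implies $C_j\subseteq \textup{conv}\bigl(\bigcup_{i\neq j}C_i\bigr)$'' unjustified. Your primary argument sidesteps that reduction entirely: you maximize the outer normal $u$ of $h$ over $H$, observe the maximum exceeds $c$ and hence cannot be contributed by any $C_i$ with $i\neq j$, and conclude that a maximizer lies in $C_j\cap\boundary H$. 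Your second, extreme-point phrasing is exactly the paper's proof with the missing step supplied (every extreme point of $H$ lies on $\boundary H$ and belongs to some member of $\C$). Your caveat about attainment is also well placed, and in fact necessary for the lemma itself, not just for your proof: a single open disk is shattered, yet contains no point of the boundary of its convex hull, so some closedness/compactness assumption on the members of $\C$ is implicitly required by the statement. Do state that hypothesis explicitly in the writeup, since the paper does not.
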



Note that in Lemma~\ref{lemma:convex hull} the sets need not to be convex.

\begin{proof}
Let us assume to the contrary that there exists a set $C$ contained in the convex hull of $\C'$, where $\C'$ is a proper subset of $\C$. Then any halfplane containing all elements in $\C'$ must also contain $C$. Therefore, it is not possible to realize $\C'$, which implies that $\C$ is not shattered.
\end{proof}

\begin{proof}[of Lemma~\ref{lem:disjoint-convex-plane}]
Let us assume by contradiction that there exists a shattered family $\C =\{1,2,3,4\}$ of four disjoint convex sets. 
For each convex set $i$ in $\C$, we denote by $p_i$ a point in $i$ that lies on 
the boundary of the convex hull of $\C$. The existence of this point is assured 
by Lemma~\ref{lemma:convex hull}. Without loss of generality, 
let us assume that $p_2$ and $p_4$ have the same $y$-coordinate, 
$p_2$ to the left of $p_4$, with $p_1$ above and $p_3$ below them. 
By assumption, there exist a halfplane $h$ containing $2$ and $4$ 
but not $1$ nor $3$. In particular, $h$ contains $p_2$ and $p_4$. 
However, as $p_1$ and $p_3$ are on the boundary of the convex hull of $\C$, 
$h$ must contain at least one of them, say $p_1$. We denote by $\ell$ the bounding 
line of $h$, which is therefore below the segment between the points $p_2$ and $p_4$. 
As the set $\{2,4\}$ is realized by $h$, $1$ must contain 
a point $q$ below $\ell$. We denote by $s$ the segment between 
the points $p_1$ and $q$. Likewise, we denote by $t$ the segment 
between the points $q$ and $p_3$. Finally, we denote by $r$ 
the union of $s$ and $t$. Let us consider the boundary of the convex hull of $\C$. The points $p_1$ and $p_3$ split it into two curves, one that contains $p_2$ and the other that contains $p_4$. Let us consider the union of the curve that contains $p_2$ with the curve $r$. We have obtained a simple closed curve, which by the Jordan curve theorem splits the plane into two parts. In particular, it splits the convex hull of $\C$ into two parts. We say that the part which contains $p_2$ is to the \emph{left} of $r$, and the other part to the \emph{right} of $r$.
As $1$ is convex, $s$ is fully contained inside
$1$, as its endpoints are contained in $1$. 
By assumption, all $i$ are pairwise disjoint.
Thus $2$ is not intersecting 
with $s$, therefore all points in $2$ are to the left of $r$, as $t$ lies below $\ell$ and $2$ is above $\ell$. 
By the same argument, all points in $4$ are to the right of $r$. 
Note that any halfplane realizing $\{1,3\}$ contains $p_1$, $q$ and $p_3$. By convexity, it contains the triangle with vertices $p_1$, $q$ and $p_3$, and in particular it contains $r$. Thus, it would also contain $2$ or $4$, which is a contradiction. 
\end{proof}

\section{Segments}
\label{sec:segments}
A line segment in the plane can be viewed as the simplest convex set that is not a point.
We now turn to study the special case of the \VC-dimension of hypergraphs induced by line segments.

\begin{lemma}\label{lem:plane-segments}
Let $S$ be a set of (not necessarily disjoint) line segments in $\Re^2$. 
Then the hypergraph induced by $S$ has \VC-dimension at most $5$. 
\end{lemma}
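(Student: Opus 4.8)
The plan is to reduce the statement to a counting estimate on how many distinct subfamilies a rotating halfplane can cut off. Since the \VC-dimension depends only on the trace of the hyperedges on the shattered set, it suffices to rule out that six segments $s_1,\dots,s_6$ are shattered, and I may assume them in general position (a family realizing all traces does so with strict containments, hence survives a small perturbation). I would parametrize a halfplane by the outer normal $u\in S^1$ of its bounding line together with an offset, and attach to each segment its support function $f_i(u)=\max_{x\in s_i}\langle x,u\rangle$. The key observation is that a halfplane with normal $u$ and offset $t$ contains $s_i$ exactly when $f_i(u)\le t$; hence the subfamilies realizable in a \emph{fixed} direction $u$ are precisely the down-sets (prefixes) of the linear order of $s_1,\dots,s_6$ by the values $f_i(u)$. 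In particular a realizable subfamily of size $k$ must be the set of the $k$ segments of smallest $f(u)$, i.e.\ the $k$-th prefix $P_k(u)$.

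Next I would let $u$ rotate once around $S^1$ and track the prefixes. In general position the order of the $f_i$ changes only through adjacent transpositions, occurring at the directions where $f_i(u)=f_j(u)$; write $c_{ij}$ for the number of swaps between $s_i$ and $s_j$, and $m_k$ for the number of swaps crossing the boundary between ranks $k$ and $k{+}1$, so that $\sum_k m_k=\sum_{i<j}c_{ij}=:C$. The prefix $P_k(u)$ is a cyclic step function of $u$ that changes only at the $m_k$ swaps at rank $k$, so it assumes at most $m_k$ distinct values; since every realizable size-$k$ subfamily equals some $P_k(u)$, at most $m_k$ subfamilies of size $k$ are realizable. A shattered family must realize all $\binom{6}{k}$ subsets of each size, so $\binom{6}{k}\le m_k$, whence $\sum_{k=1}^{5}\binom{6}{k}\le\sum_{k=1}^{5}m_k=C$, that is $62\le C$.

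The key lemma, and the step I expect to be the main obstacle, is the matching upper bound $c_{ij}\le 4$ for every pair of segments. Here I would use that $f_i(u)=f_j(u)$ means the supporting lines of $s_i$ and $s_j$ with outer normal $u$ coincide, i.e.\ $u$ is the normal of an outer common tangent of the two segments. Such a tangent is a supporting line of the convex hull $Q$ of the four endpoints touching an endpoint of $s_i$ and an endpoint of $s_j$; away from degeneracies these correspond to edges of $Q$ joining an endpoint of $s_i$ to an endpoint of $s_j$, and since $Q$ has at most four edges there are at most four such directions. The value $4$ is attained exactly when the four endpoints are in convex position with $s_i,s_j$ as the two diagonals, i.e.\ when the segments cross, which is why intersecting segments push the dimension above the disjoint value. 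Granting $c_{ij}\le 4$ we obtain $C\le 4\binom{6}{2}=60$, contradicting $62\le C$; hence six segments cannot be shattered and the \VC-dimension is at most $5$.

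Finally I would scrutinize the two tight spots. The estimate ``distinct size-$k$ realizable subfamilies $\le m_k$'' is what eliminates an extra additive term of order $n$ and sharpens the bound from the naive $6$ to the correct $5$; and the crossing bound $c_{ij}\le 4$ must be argued with care, including the degenerate cases where an endpoint lies in the relative interior of the other segment or where endpoints are collinear, which I would dispatch by appealing to the general-position assumption fixed at the outset.
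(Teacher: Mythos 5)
Your proof is correct and follows essentially the same route as the paper: both arguments bound the number of realizable subsets by charging each nonempty proper realized subfamily to a common supporting-line direction of a pair of segments, observe that each pair contributes at most $4$ such directions (with equality exactly for crossing segments), and conclude from $2^6-2=62>60=4\binom{6}{2}$ (the paper states this as ``realized subsets $\le 2n(n-1)+2$'' versus $2^n$). Your support-function/circular-sequence bookkeeping is a cleaner formalization of the paper's rotate-the-tangent-line argument, but the key lemma and the arithmetic are identical.
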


\begin{figure}[htbp]
    \centering
    \includegraphics[page=2,scale=1.1]{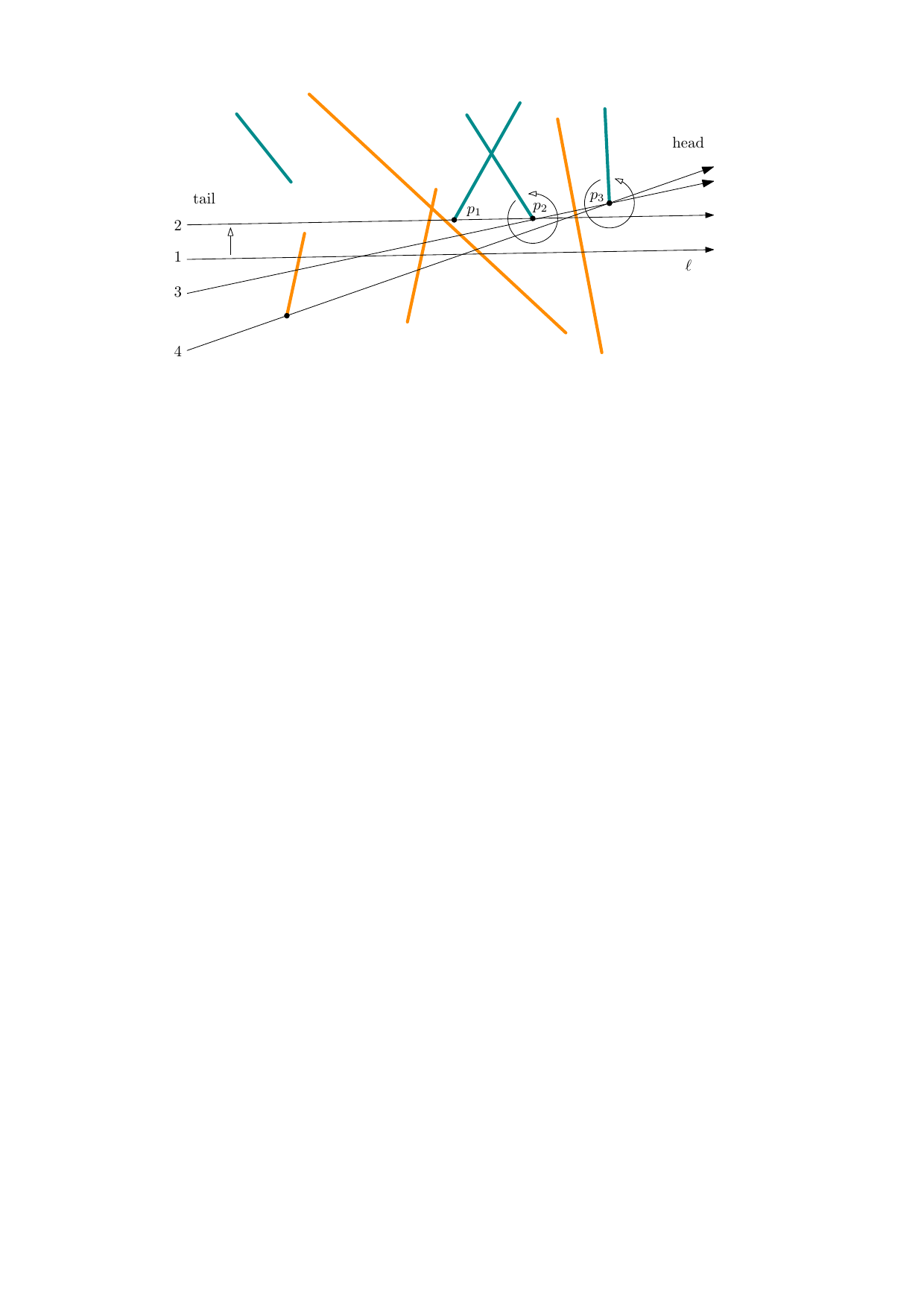}
    \caption{The halfspace $h$ realizing some 
    set $S'$ indicated in turquoise. 
    The line $\ell$ bounds $h$. First, we move $\ell$ up, and then we 
    rotate it around $p_2$ and $p_3$ in that order respectively.}
    \label{fig:Tangent-Construction}
\end{figure}

Before proceeding with the proof we need the following lemma.
We say that a set of segments is \emph{in general position}, if no
three endpoints are collinear. We give an upper bound on the number of subsets that can be realized, by relating this number to the number of tangents to pairs of segments. To the best of our knowledge, we do not know of any previous result that uses the same argument.

\begin{lemma}\label{lem:Tangent-Argument}
Let $S$ be a set of $n$ segments in the plane, in general position. 
Then the number of subsets of $S$ that are realized is at most $2n\left(n-1\right) + 2$.
\end{lemma}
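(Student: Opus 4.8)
The plan is to parametrize each realizing halfplane by the direction of its inward normal and to reduce the problem to counting how often, over a full rotation of this direction, the order in which segments are swallowed by the halfplane changes. For a unit direction $u(\theta)=(\cos\theta,\sin\theta)$, a segment $s\in S$ is contained in the halfplane $\{x:\langle x,u(\theta)\rangle\ge c\}$ exactly when $g_s(\theta):=\min_{x\in s}\langle x,u(\theta)\rangle\ge c$. Hence, for a fixed $\theta$ and a decreasing threshold $c$, the segments enter the halfplane one at a time in nonincreasing order of $g_s(\theta)$, and every subset realized by a halfplane with inward normal $u(\theta)$ is one of the $n+1$ prefixes of this order. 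In general position the values $g_s(\theta)$ are distinct for all but finitely many $\theta$, so each realized proper nonempty subset arises as the set of the $k$ largest segments (a ``top-$k$ set'') for some $\theta$ and some $1\le k\le n-1$; the empty set and all of $S$ will account for the additive $+2$.

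First I would bound the number of distinct top-$k$ sets. As $\theta$ runs over $[0,2\pi)$, the linear order of the segments by $g_s(\theta)$ is piecewise constant and changes only by swapping two segments $s,s'$ at the instant their scores cross, i.e.\ when $g_s(\theta)=g_{s'}(\theta)$; such a swap alters only the single top-$k$ set at the position where $s,s'$ are adjacent. Consequently the number of realized proper nonempty subsets, which is the total number of distinct top-$k$ sets over all $k$, is controlled by the total number of crossings $\sum_{\{s,s'\}}N(s,s')$, where $N(s,s')$ denotes the number of $\theta\in[0,2\pi)$ with $g_s(\theta)=g_{s'}(\theta)$. The whole statement then reduces to showing $N(s,s')\le 4$ for every pair, which yields $\sum_{\{s,s'\}}N(s,s')\le 4\binom{n}{2}=2n(n-1)$ and hence at most $2n(n-1)+2$ realized subsets in total.

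The heart of the argument, and the step I expect to be the main obstacle, is the bound $N(s,s')\le 4$, which is where the phrase ``tangents to pairs of segments'' enters. A direction $\theta$ with $g_s(\theta)=g_{s'}(\theta)=c$ is precisely a direction in which the line $\{x:\langle x,u(\theta)\rangle=c\}$ supports both $s$ and $s'$ with both segments on its positive side; that is, it is a common tangent line of $s$ and $s'$ touching each at an endpoint and leaving both on one side. Such a line meets the convex hull $\mathrm{conv}(s\cup s')$ in one of its edges, and, by general position, this hull is a (possibly degenerate) quadrilateral with at most four edges, of which only those joining an endpoint of $s$ to an endpoint of $s'$ are common tangents touching both segments. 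Hence there are at most four such directions, giving $N(s,s')\le 4$; the bound is $2$ for disjoint segments, whose hull has two ``bridge'' edges, and up to $4$ for crossing segments, whose hull edges all join $s$ to $s'$. The remaining care is bookkeeping: checking that every realized proper nonempty subset is charged to a genuine crossing, and that the finitely many degenerate directions (where $u(\theta)$ is perpendicular to a segment, or where three endpoints would be collinear) are harmless by the general-position hypothesis and a small perturbation of $\theta$.
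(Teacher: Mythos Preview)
Your proof is correct and is essentially the paper's argument in analytic clothing: both charge each nontrivial realized subset to a common tangent of two segments (one inside the subset, one outside) and bound these by four per pair, yielding $4\binom{n}{2}+2$. The paper performs the charging geometrically—translating the bounding line inward until it touches a segment of $S'$, then rotating counterclockwise until it also touches a segment outside $S'$—while you phrase the same process as a rotational sweep tracking the order of the support values $g_s(\theta)$; the canonical line the paper lands on is precisely a crossing $g_s(\theta)=g_{s'}(\theta)$ in your sense, and your convex-hull count of bridge edges reproduces the paper's four tangent configurations (two for disjoint pairs, four for crossing pairs).
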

\begin{proof}
Let $h$ be a halfplane realizing a subset $S' \subseteq S$,
with $S'\not = \emptyset$ and $S'\not = S$.  See Figure~\ref{fig:Tangent-Construction} for an illustration.
In the first step, we identify a unique tangent line $\ell$, 
by some transformation argument.
In the second step, we show that every pair of segments has
at most four tangent lines.
Thus, together with the trivial subsets of $S$, we can realize at most \[4 \binom{n}{2} + 2 = 2n(n-1) + 2\] 
subsets $S'$.

\begin{figure}[htbp]
    \centering
    \includegraphics[page=2,scale=0.9]{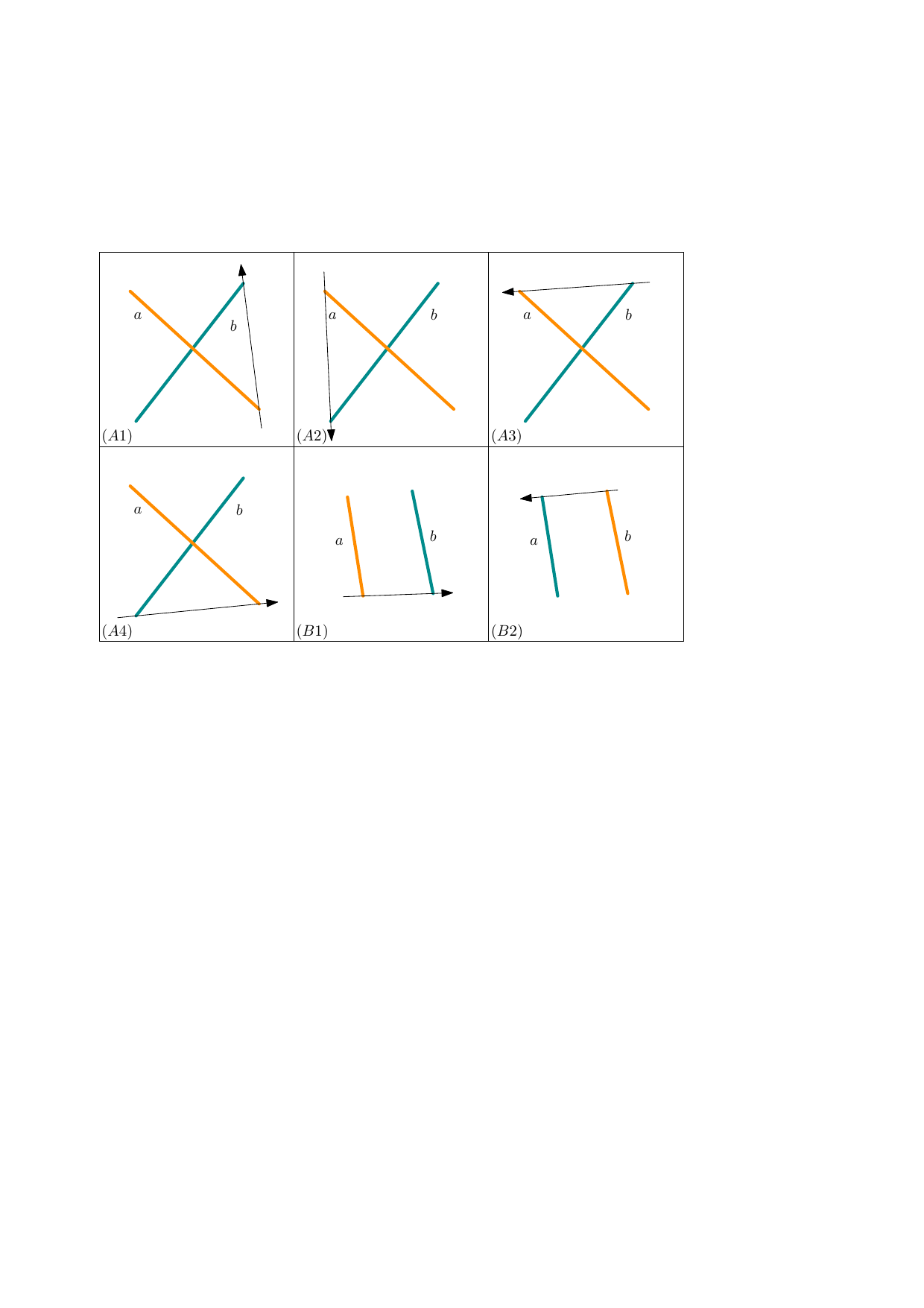}
    \caption{Given two segments there are at most
    four configurations that can arise. All of them 
    are displayed. Note that if the two segments are not crossing
    then only two configurations are possible.}
    \label{fig:Rotation-Configurations}
\end{figure}
 We denote by $\ell$ the bounding line. We orient
 $\ell$ from tail to head such that $S'$ lies to the left of $\ell$.
 If there are several points on $\ell$ then we
 can clearly say, which is closest to its head in the obvious way.
 Translate $h$ inward until its boundary line $\ell$ hits one element of $S'$.
 This must happen as $S'\not = \emptyset$.
 As the set $S$ is in general position,  $\ell$ touches $S'$ 
 in at most two endpoints $p,q$. Suppose that, we touch indeed
 two points, the other case is handled similarly.
 Furthermore, we say that $q$ is the point closer to
 the head of $\ell$.
 Then we rotate $\ell$ counterclockwise around $q$, 
 up until one of two events happen.
 \begin{itemize}
     \item[(a)] The line $\ell$ touches another vertex of some segment $s\in S'$ at its head.
     \item[(b)] The line $\ell$ touches an endpoint of some segment $s \in (S\setminus S')$ at its tail.
 \end{itemize} 
 Note that it could also be that $\ell$ touches 
 a vertex of some segment $s \in (S\setminus S')$ at its head.
 We ignore that case, as this event does not change whether
 $h$ realizes $S'$ or not.
 It is easy to see that it is impossible that 
 $\ell$ touches another vertex of some segment $s\in S'$ at its tail.
 In case (a), we touch a new point $q'$ and we proceed as before.
 In other words, we rotate $\ell$ counterclockwise around $q'$, up until,
 either (a) or~(b) will happen.
 In case~(b), we stop. Note that since $S'\not = S$,
 this will eventually happen.
 We will end up in a configuration,
 where $\ell$ touches a vertex of a segment $s\in S'$ at its head
 and a vertex of another segment $t\in (S\setminus S')$ at its tail.
 Note that both segments are to the left of $\ell$, with respect
 to the orientation of $\ell$.
 Note that the halfspace $h$ defined by $\ell$ only needs an infinitesimally small rotation to realize the original set $S'$
 that we started with. Thus if there were any halfspace realizing $S'$, there must be one of the special type, that we just described.
 This shows the first step. In the second step, we will upper bound 
 the number of those special configurations.

 For the second step, consider two segments $a,b \in S$.
 See Figure~\ref{fig:Rotation-Configurations} for an illustration.
 Note first that they are either crossing or they are disjoint.
 One of them must be contained in the set $S'$ that we want to realize
 and the other is not.
 This also immediately tells us the orientation of the line
 in the configuration. It is easy to check that
 all four configurations are displayed in
 Figure~\ref{fig:Rotation-Configurations}.
\end{proof}

\begin{proof}[of Lemma~\ref{lem:plane-segments}]
Let $S$ be a set of $n$ line segments that can be shattered.
We can assume that $S$ is in general position,
by some standard perturbation arguments.
We will use the fact that the number of distinct subsets of $S$ 
that are realized is at most $2n\left(n-1\right) + 2$, see Lemma~\ref{lem:Tangent-Argument}. 
As there are $2^n$ subsets that need to be realized, for 
$S$ to be shattered, we can conclude that 
it follows that $2^n \leq 2n\left(n-1\right) + 2$. 
However, this inequality is violated for $n\geq 6$, so $n \leq 5$.
\end{proof}

\begin{figure}[htbp]
    \centering
    \includegraphics[page=2,scale=0.8]{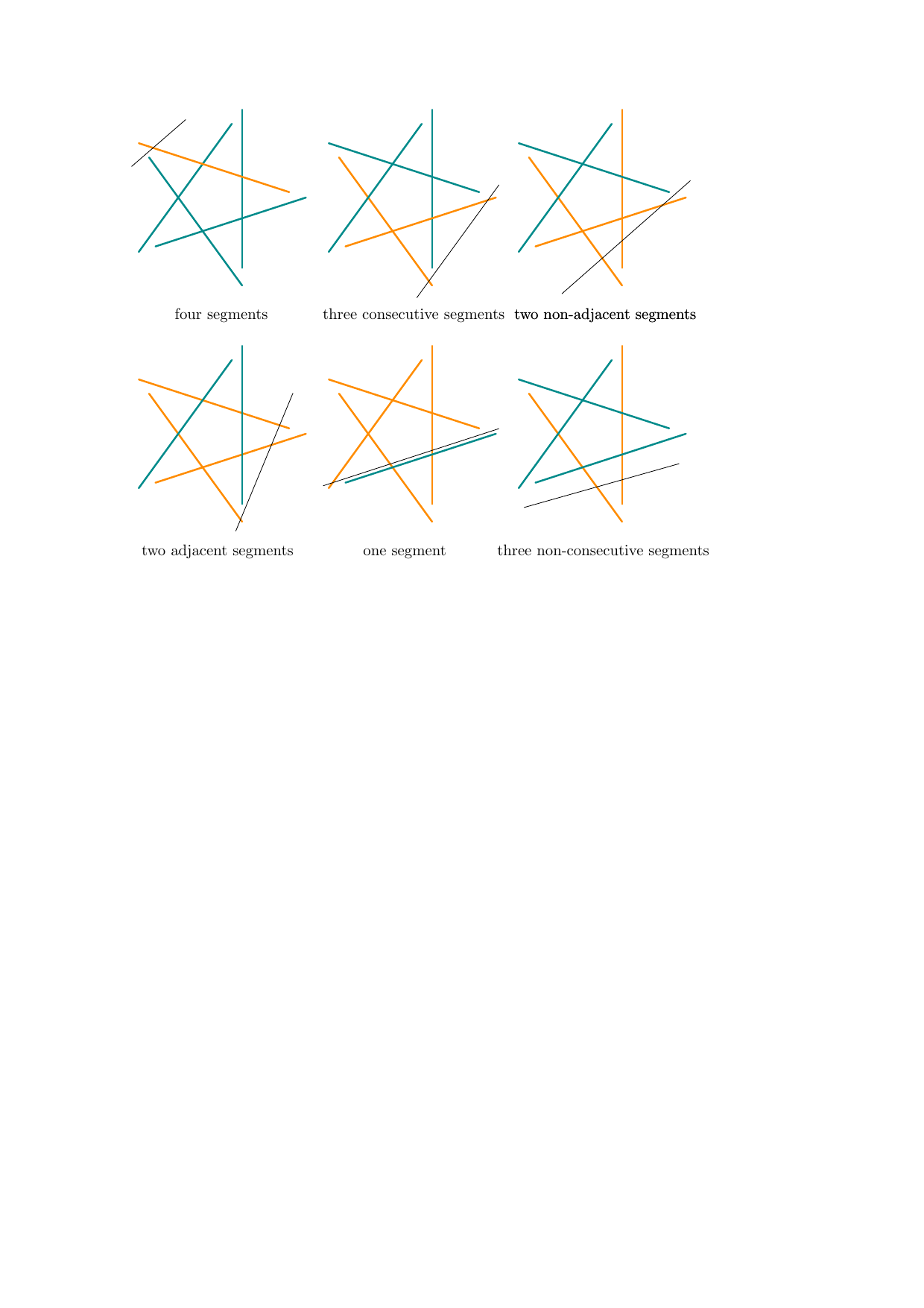}    
    \caption{Shattering five segments as in Lemma~\ref{lemma:five segments}.
    The shattered set is marked turquoise, the other segments are coloured orange.
    Due to the symmetry of the set of segments, we have shown that all hyperedges
    can be realized and thus the set can be shattered.}
    \label{fig:five-Segments}
\end{figure}

The next lemma shows the second part of Theorem~\ref{thm:mainSegments}.
\begin{lemma}\label{lemma:five segments}
There exists a set of five segments that are shattered by halfplanes.
\end{lemma}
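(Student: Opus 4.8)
The plan is to exhibit five concrete segments together with an explicit halfplane for each of the $2^5 = 32$ subsets, and then exploit the symmetry of the configuration to reduce the number of cases that must be checked by hand. Since Lemma~\ref{lem:plane-segments} already shows the \VC-dimension is at most $5$, and Figure~\ref{fig:five-Segments} displays the intended construction, my job is to verify that every subset of the five-element shattered set can be realized by some halfplane fully containing exactly the segments of that subset and no others.

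First I would fix coordinates for the five segments, choosing them so that the construction has a large symmetry group (the figure suggests a dihedral symmetry, e.g. a $5$-fold rotational symmetry of the five ``inner'' segments to be shattered, with the auxiliary ``orange'' segments placed symmetrically as well). Writing the shattered family as $\{s_0,\dots,s_4\}$ indexed cyclically, a rotation by $2\pi/5$ permutes the indices cyclically and maps the whole configuration (shattered and auxiliary segments together) to itself; a reflection gives an additional symmetry. This means the $32$ subsets fall into a small number of orbits under the symmetry group, classified essentially by cardinality and by the cyclic pattern of which segments are included. I would then realize one representative subset from each orbit by an explicit halfplane and invoke the symmetry to conclude that every subset in the orbit is likewise realizable.

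The key steps, in order, are: (i) state the coordinates of the five shattered segments and of the auxiliary segments; (ii) identify the symmetry group of the configuration and verify it acts on the shattered family by the full set of symmetries needed (cyclic shifts and a reflection); (iii) enumerate the orbit representatives of subsets of $\{s_0,\dots,s_4\}$ under this action — the empty set, the full set, and a small number of patterns for each intermediate size — and (iv) for each representative, produce a halfplane $h$ (equivalently a bounding line $\ell$ with a chosen side) that contains exactly the segments of the representative subset and excludes all other segments, checking containment by verifying that both endpoints of each included segment lie in $h$ while each excluded segment has an endpoint outside $h$. Closing the argument, I would note that applying the symmetries to these realizing halfplanes produces realizing halfplanes for all remaining subsets, so the whole family is shattered.

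The main obstacle is the case analysis for the intermediate-cardinality subsets: even after reducing to orbit representatives, one must confirm that the auxiliary segments are placed precisely so that no ``forbidden'' segment sneaks entirely into the halfplane that is supposed to cut off a given subset. Concretely, the delicate point is separating a subset $S'$ from its complement when the included and excluded segments interleave around the circle, since a naive separating line may inadvertently contain an excluded segment in full. This is exactly what the auxiliary (orange) segments are designed to prevent, by blocking the ``bad'' halfplanes; verifying that the chosen coordinates achieve this for every orbit representative is the heart of the proof, and I would lean on the figure and on the symmetry reduction to keep this verification manageable rather than checking all $32$ cases independently.
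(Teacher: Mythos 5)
Your overall approach is essentially the paper's: the paper proves the lemma by exhibiting a symmetric configuration of five segments in Figure~\ref{fig:five-Segments} and checking, panel by panel and up to the symmetry of the configuration, that every subset is realized by a halfplane; your steps (i)--(iv) are exactly that verification spelled out. One correction, though: the orange segments in the figure are not auxiliary blockers added alongside the shattered family --- they are simply the members of the \emph{same} five-segment set that lie outside the subset currently being realized, so the configuration consists of exactly five segments and nothing else. The closing paragraph of your plan therefore rests on a misreading: extra segments could not in any case ``block a bad halfplane,'' since whether a halfplane contains exactly the chosen members of the shattered five is unaffected by what other segments are present; the real work is only to choose the five segments so that for every interleaving pattern some line separates the right endpoints, which is what the figure certifies.
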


\begin{proof}
The set is shown in Figure~\ref{fig:five-Segments}. The segments that are realized are shown in turquoise, and the other in orange.
It is easy to find a halfplane realizing none or all the segments. 
We show in Figure~\ref{fig:five-Segments} how to realize all the remaining configurations.
\end{proof}

\section{Number of intersections}
\label{sec:Intersections}

From Lemma~\ref{lem:disjoint-convex-plane} we have proven that any shattered set of $n$ convex sets are not pairwise disjoint when $n\geq 4$. We show that there are quadratically many pairs of intersecting convex sets.

\begin{lemma}
In a shattered set of $n$ convex sets there are at least $n(n-3)/6$ intersections.
\end{lemma}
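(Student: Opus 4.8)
The plan is to convert the statement into an extremal graph-theoretic bound, with Lemma~\ref{lem:disjoint-convex-plane} supplying the crucial geometric input. The first ingredient is that shattering is hereditary: if a family $\C$ of convex sets is shattered, then so is every subfamily $Y \subseteq \C$. Indeed, for any $Z \subseteq Y$ the set $Z$ is itself a realized subfamily of $\C$ (because $\C$ is shattered), so there is a halfplane $h$ with $\{C \in \C \mid C \subset h\} = Z$; restricting this containment pattern to $Y$ gives $\{C \in Y \mid C \subset h\} = Z$, since $Z \subseteq Y$. As the containment pattern of a halfplane on $Y$ is unaffected by the remaining sets of $\C$, this shows that $Y$ is shattered by halfplanes, so in particular every four-element subfamily of $\C$ is shattered.

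Next I would apply Lemma~\ref{lem:disjoint-convex-plane}, which asserts that a family of pairwise disjoint convex sets has \VC-dimension at most $3$ and hence cannot be shattered once it has four members. Combined with the hereditary property above, this yields the following purely local condition: no four of the sets in $\C$ are pairwise disjoint.

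I would then phrase the counting in graph language. Let $G$ be the graph on the $n$ sets in which two sets are adjacent precisely when they intersect; the quantity to bound from below is $|E(G)|$. The local condition says exactly that the complement $\overline{G}$, whose edges are the disjoint pairs, contains no $K_4$. By Tur\'an's theorem a $K_4$-free graph on $n$ vertices has at most $\left(1-\tfrac13\right)\tfrac{n^2}{2}=\tfrac{n^2}{3}$ edges, so $|E(\overline{G})| \le \tfrac{n^2}{3}$. Consequently
\[
 |E(G)| \;=\; \binom{n}{2} - |E(\overline{G})| \;\ge\; \frac{n(n-1)}{2} - \frac{n^2}{3} \;=\; \frac{n(n-3)}{6},
\]
which is exactly the claimed lower bound on the number of intersecting pairs.

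The argument has no deep obstacle once the reduction is in place; the two points that require care are the hereditary property of shattering --- which is what lets the four-set obstruction from Lemma~\ref{lem:disjoint-convex-plane} apply to every quadruple --- and the orientation of Tur\'an's theorem, namely that the forbidden $K_4$ must sit in the disjointness graph $\overline{G}$ rather than in the intersection graph $G$.
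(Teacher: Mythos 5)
Your proposal is correct and follows essentially the same route as the paper: reduce to the statement that no four of the sets are pairwise disjoint (via Lemma~\ref{lem:disjoint-convex-plane}), observe that the complement of the intersection graph is therefore $K_4$-free, and apply Tur\'an's theorem to get $\binom{n}{2}-n^2/3=n(n-3)/6$. The only difference is that you spell out the hereditary property of shattering, which the paper leaves implicit.
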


\begin{proof}
Consider the intersection graph $G$ of the convex sets. As for any four vertices there is an edge, we obtain that the independence number of $G$ is at most $3$.
(The independence number of a graph denotes the size of the largest independent set of the graph.)
Therefore there is no $K_4$ in the complement of $G$. Tur{\'a}n's theorem states that any graph with $n$ vertices not containing $K_{k+1}$ has at most $(1-1/k)\cdot n^2/2$ edges~\cite{turan1941external}. Therefore, there are at most $(1-1/3)\cdot n^2/2=n^2/3$ non-edges in $G$. 
This is equivalent to having at least 
$\binom{n}{2} -n^2/3=n(n-3)/6$ edges in $G$. 
\end{proof}

\begin{figure}[htbp]
    \centering
    \includegraphics[page=2,scale=0.9]{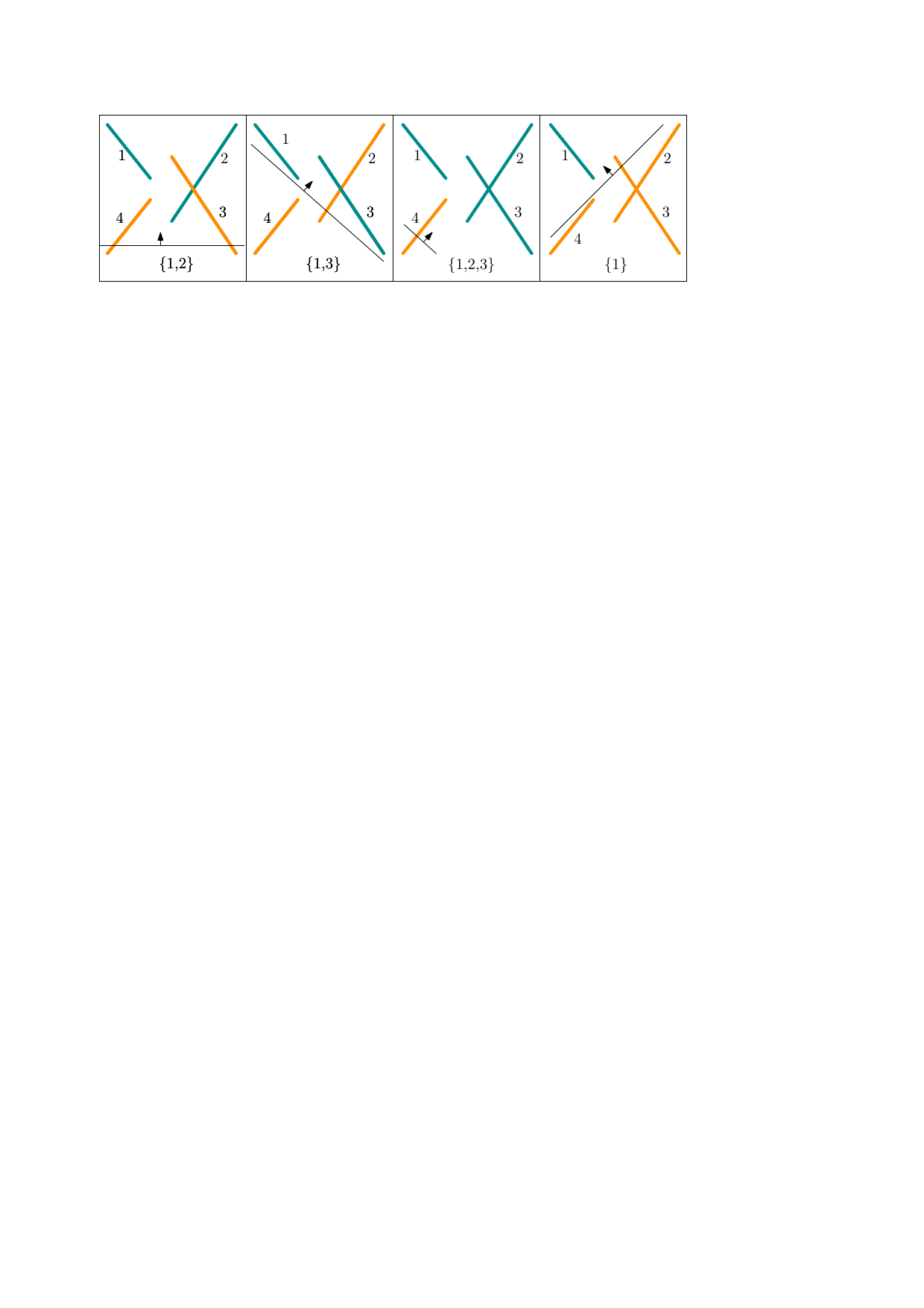}
    \caption{Shattering a set of $4$ segments, with only one intersection.
    We indicate how to shatter the sets $\{1,2\},\{1,3\},\{1,2,3\}$ 
    and~$\{1\}$. 
    All other sets follow the same principle.}
    \label{fig:Four-Segments-Intersection}
\end{figure}

It would be interesting to find an upper bound on how few intersections there may be in a shattered set of $n$ convex sets. The question can also be asked for $n\leq 5$ when considering the more specific case of segments. We have given in Lemma~\ref{lemma:five segments} a shattered set of five segments with five intersections. We produce now an example of a shattered set with four segments having only one intersection.

\begin{lemma}\label{lem:four-segments-intersection}
There exists a shattered set of four segments with only one intersection.
\end{lemma}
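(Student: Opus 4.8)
The plan is to prove the existence by an explicit construction: I would exhibit four segments, exactly one pair of which crosses while all other pairs are disjoint, and then verify that each of the $2^4 = 16$ subsets is realized by a halfplane. Since the preceding lemma shows that at least one intersection is necessary among any shattered family of four convex sets, such an example is optimal with respect to the number of intersections, which is precisely what makes it worth recording.

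First I would fix the construction depicted in Figure~\ref{fig:Four-Segments-Intersection}, labeling the segments $1,2,3,4$ so that segment $1$ is the one participating in the single crossing. I would record explicit endpoint coordinates, or at least a qualitative description precise enough to pin down the combinatorial type of every pair of tangent lines, so that the verification below is unambiguous. The empty subset and the full subset are immediate: one takes a halfplane far from all segments, respectively a halfplane containing all of them.

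Next I would verify the remaining fourteen subsets. Rather than check them one by one, I would exploit the symmetry of the construction to collapse the subsets into a small number of orbits, and exhibit a witnessing halfplane for one representative of each orbit, exactly as indicated in the figure for the sets $\{1\}$, $\{1,2\}$, $\{1,3\}$ and $\{1,2,3\}$, with all remaining sets following by symmetry. To produce each witness concretely I would invoke the sliding-and-rotating normalization from the proof of Lemma~\ref{lem:Tangent-Argument}: start from a separating line and push and rotate it until it is tangent to the appropriate endpoints, which both certifies realizability and makes the picture reproducible.

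The main obstacle is exactly the configuration that defeats four pairwise disjoint sets in Lemma~\ref{lem:disjoint-convex-plane}: the interleaving (``diagonal'') pairs of the four boundary representatives cannot both be cut off by halfplanes when the sets are disjoint, by the Radon-type argument. The whole purpose of the single crossing is to break this obstruction, so the delicate part of the construction is positioning the crossing pair so that both otherwise-forbidden subsets, together with all subsets containing one crossing partner but not the other, become simultaneously realizable, while keeping the three non-crossing pairs genuinely disjoint. Checking that one crossing already suffices to realize all such subsets is where the geometry must be verified with care.
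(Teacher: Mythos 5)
Your proposal is correct and follows essentially the same route as the paper: an explicit four-segment configuration with a single crossing pair (the one in Figure~\ref{fig:Four-Segments-Intersection}), verified by exhibiting a witnessing halfplane for each subset, grouping the fourteen nontrivial subsets by type/symmetry rather than checking all of them individually. The paper's proof likewise leans on the figure and a short case analysis (three segments, two consecutive, two opposite, singletons), so your plan matches it in both structure and level of detail.
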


\begin{proof}
 We consider the four segments as in Figure~\ref{fig:Four-Segments-Intersection}, 
 denoted by $\{1,2,3,4\}$. 
 It is easy to realize none or all segments. 
 To realize three of them, consider a halfplane whose bounding line intersects the fourth segment. Likewise to realize two consecutive segments, consider a halfplane whose bounding line intersects the two remaining segments.
 For opposite segments, say $\{1,3\}$, take a halfplane not containing $4$ whose bounding line intersects $2$. To realize $\{2,4\}$, take a halfplane not containing $1$ whose bounding line is parallel to $4$ and intersects $3$. Finally the reader can check that for each set with exactly one segment~$i$, it is possible to find a halfplane containing only~$i$.
\end{proof}

\section{Open questions}
As mentioned in Section~\ref{sec:Intersections}, it would be interesting to find tighter lower bounds on the number of intersections in a shattered set of $n$ convex sets in the plane. Likewise, we can ask the same question when the convex sets are constrained to be segments. By Lemma~\ref{lem:disjoint-convex-plane}, we know that for any shattered set of at least four convex sets, there are two convex sets intersecting. We have shown in Lemma~\ref{lem:four-segments-intersection} that it is possible to find a shattered set of four convex sets, with only one intersection. Even more, this holds under the additional constraint that the convex sets be segments. Therefore, we ask whether this holds for any $n$: Is the lower bound on the number of intersections the same whether we consider segments or general convex sets? If not, what is the lower bound when considering polygons with $k$ vertices?

By Theorem~\ref{thm:mainConvexIntersecting}, the VC-dimension of convex sets in the plane is unbounded. However, when restricting to segment, we have shown in Theorem~\ref{thm:mainSegments} that the VC-dimension is at most $5$, and this is tight. The problem of finding upper bounds on the VC-dimension naturally generalizes to other types of constrained convex sets, for instance polygons with $k$ vertices.

\acknowledgements
This work was initiated during the 17th Gremo Workshop on Open Problems 2019. The authors would like to thank the other participants for interesting discussions during the workshop.
We are also grateful to the organizers for the invitation
and a productive and pleasant working atmosphere.

\nocite{*}
\bibliographystyle{abbrvnat}
\bibliography{references}


\end{document}